\newcommand{\pagenumbaa}{1}
\theoremstyle{plain}
\newtheorem{theorem}{Theorem}
\newtheorem{lemma}[theorem]{Lemma}
\newtheorem{corollary}[theorem]{Corollary}
\theoremstyle{definition}
\newtheorem{definition}[theorem]{Definition}
\tikzset{cross/.style={cross out, draw=black, minimum size=2*(#1-\pgflinewidth), inner sep=0pt, outer sep=0pt},
cross/.default={1pt}}
\newcommand{\beq}{\begin{equation}}
\newcommand{\eeq}{\end{equation}}
\newcommand{\beqa}{\begin{eqnarray}}
\newcommand{\eeqa}{\end{eqnarray}}
\newcommand{\bal}{\begin{align}}
\newcommand{\eal}{\end{align}}
\newcommand{\bsp}{\begin{equation}\begin{split}}
\newcommand{\esp}{\end{split}\end{equation}}
\newcommand{\bit}{\begin{itemize}}
\newcommand{\eit}{\end{itemize}}
\newcommand{\ben}{\begin{enumerate}}
\newcommand{\een}{\end{enumerate}}
\newcommand{\nn}{\nonumber}
\newcommand{\AR}{\mathbb{R}}
\newcommand*{\ket}[1]{| #1 \rangle}
\newcommand*{\bra}[1]{\langle #1 |}
\newcommand{\ketbra}[1]{| #1 \rangle \langle #1 |}
\newcommand{\tr}{\mathrm{tr}}
\newcommand{\rank}{\mathrm{rank}}
\newcommand{\braket}[2]{\langle #1 | #2 \rangle}
\newcommand{\Herm}{\mathrm{Herm}}
\newcommand{\data}{p}
\begin{document}


\title{Learning optimal quantum models is NP-hard}


\author{Cyril J.~Stark\footnote{Center for Theoretical Physics, Massachusetts Institute of Technology, 77 Massachusetts Avenue, Cambridge MA 02139-4307, USA}}

\date{\today}

\maketitle


\begin{abstract}

Physical modeling closes the gap between perception in terms of measurements and abstraction in terms of theoretical models. Physical modeling is a major objective in physics and is generally regarded as a creative process. How good are computers at solving this task? This question is both of philosophical and practical interest because a positive answer would allow an artificial intelligence to understand the physical world. Quantum mechanics is the most fundamental physical theory and there is a deep belief that nature follows the rules of quantum mechanics. Hence, we raise the question whether computers are able to learn optimal quantum models from measured data. Here we show that in the absence of physical heuristics, the inference of optimal quantum models cannot be computed efficiently (unless \emph{P} $=$ \emph{NP}). This result illuminates rigorous limits to the extent to which computers can be used to further our understanding of nature.

\end{abstract}

\setcounter{page}{\pagenumbaa}
\thispagestyle{plain}

\section{Introduction}\label{sect:intro}

A characterization of a physical experiment is always two-fold. On the one hand, we have a description 
\[
	\mathcal{S} = \bigl( \text{description of the state} \bigr)
\]
of the state of the physical system. For instance, $\mathcal{S}$ can contain a few paragraphs of text with detailed instructions for preparing that state experimentally in the lab, or for finding it in nature.

The second part of the characterization of an experiment is the description of the measurement that is performed. As for the state, the measurement may be described in terms of a short text,
\[
	\mathcal{M} = \bigl( \text{description of the measurement} \bigr).
\]
$\mathcal{M}$ may be a complete manual for constructing the measurement device we use.

Both $\mathcal{S}$ and $\mathcal{M}$ can specify \emph{temporal} and \emph{spatial} information, e.g., the desired state is the state resulting from a particular initial state after letting it evolve for 1$\mu$s. Every experimental paper must provide $\mathcal{S}$ and $\mathcal{M}$.

Performing the measurement $\mathcal{M}$ results in a \emph{measurement outcome}. We denote by $Z$ the number of different measurement outcomes. Each of the outcomes may again be characterized in terms of a few paragraphs of text
\[
	\mathcal{O}_z = \bigl( \text{description of $z$th measurement outcome} \bigr)
\]
for all $z \in [Z] = \{ 1,...,Z \}$. Here we assume that the description $\mathcal{O}_z$ also specifies $\mathcal{M}$, i.e., it both fully specifies the measurement device and the way it signals `\emph{outcome $z$ has been measured}' to the observer.

Oftentimes we do not only consider a single state $\mathcal{S}$ and a single measurement $\bigl( \mathcal{O}_z \bigr)_{z \in [Z]}$ but $X$ states $\bigl( \mathcal{S}_x \bigr)_{x \in [X]}$ and $Y$ measurements $\bigl( \mathcal{O}_{yz} \bigr)_{z \in [Z]}$ ($y \in [Y]$). For instance, we could be interested in measuring the spin of an electron in different directions and at different times. Repeatedly measuring the state $\mathcal{S}_x$ with the measurement $\mathcal{M}_y$ we are able to collect \emph{empirical frequency distributions} $(f_{xyz})_z$ for that particular sequence of measurements. I.e., $f_{xyz} = \sharp \{ z|xy \} / N_{xy}$ where $N_{xy}$ denotes the number of times we decide to measure $\mathcal{S}_x$ with $\mathcal{M}_y$ and where $\sharp \{ z|xy \}$ denotes the number of times we measure outcome $\mathcal{O}_{yz}$ during those runs of the experiment.

To describe the experiment quantum mechanically we need to translate the verbose descriptions $\mathcal{S}_x$ and $\mathcal{O}_{yz}$ into quantum states $\rho_x$ and measurement operators $E_{yz}$. This corresponds to the task of modeling. The assignment of matrices to $\mathcal{S}_x$ and $\mathcal{O}_{yz}$ must be such that the quantum mechanical predictions are compatible with the previously measured data $f_{xyz}$. By Born's rule, $\tr(\rho_x E_{yz})$ is the probability for measuring outcome $z$ if we measure state $\mathcal{S}_x$ with the measurement $\mathcal{M}_y$. Hence, demanding compatibility between the theoretical picture $\rho_x, E_{yz}$ on the one hand and the experimental description $\mathcal{S}_x, \mathcal{O}_{yz}$ on the other hand amounts to searching states and measurements satisfying $\tr( \rho_x E_{yz} ) \approx f_{xyz}$ for all $(x,y,z) \in \Omega$. Here, $\Omega \subseteq [X] \times [Y] \times [Z]$ marks the particular combinations $(x,y,z)$ that we have measured experimentally. Combinations in the complement $(x,y,z) \in \Omega^c$ are unknown. A common pitfall to avoid is \emph{overfitting}, that is, finding an excessively complicated model that perfectly fits the data but has no predictive power over future observations.  To avoid overfitting we need to search for the lowest-dimensional model satisfying $\tr( \rho_x E_{yz} ) \approx f_{xyz}$. Note that if we placed no restriction on the dimension then we could fit every dataset exactly with a \emph{finite-dimensional} quantum model that does not allow for the prediction of future measurement outcomes. For instance, we could fit the measured data with an $X$-dimensional model where $\rho_x = \ketbra{x}$ and $E_{yz} = \sum_{x=1}^X f_{xyz} \ketbra{x}$. Indeed, $\tr(\rho_x E_{yz}) = f_{xyz}$. In contrast, if a subsystem structure (e.g., two independent parties Alice and Bob) is imposed then there are circumstances where datasets cannot be modeled by finite-dimensional quantum models~\cite{ramanathan2014no,sikora2015minimum}. 

In the remainder we are going to assume that the empirical frequencies $f_{xyz}$ are equal to the probabilities $p_{xyz}$ for measuring outcome $\mathcal{O}_{yz}$ given that we prepared $\mathcal{S}_x$ and measured $\mathcal{M}_y$. This condition is met if we can measure states $\mathcal{S}_x$ with measurements $\mathcal{M}_y$ an unbounded number of times ($N_{xy} \rightarrow \infty$). We will see that even in this noiseless setting where we want to solve
\beq\label{fewk435hjfewf0}
\begin{split}
	\text{minimize} \ & \ d \\  \text{such that} & \ \exists \text{ $d$-dimensional states and measurements}\\  &\text{ satisfying $\data_{xyz}  = \tr(\rho_x E_{yz})$ $\forall (x,y,z) \in \Omega$},
\end{split}
\eeq
inference is \emph{NP}-hard. We call problem~\eqref{fewk435hjfewf0} \emph{MinDim}; it describes the task of \emph{learning effective quantum models} from experimental data. Our result that \emph{MinDim} is \emph{NP}-hard implies that computers are not capable of computing optimal quantum models describing general experimental observations (unless $P = NP$).

\emph{NP}-hardness is a term from \emph{computational complexity theory} which aims at classifying problems according to their complexity. The relevant complexity measure depends on the particular application. Here we focus on \emph{time complexity} which measures the time it takes to solve a problem on a computer (deterministic Turing machine). A particularly important family of problems are \emph{decision problems}. These are problems whose solution is either \emph{yes} or \emph{no}. 3-coloring of graphs is a famous example. In 3-coloring (\emph{3col}) we are given a graph with vertices specified by a vertex set $V$ and with edges specified by an edge set $E$. Our task is to decide whether or not it is possible to assign colours \emph{red}, \emph{green} or \emph{blue} to  vertices $v \in V$ in such a way that vertices $v,v'$ are colored differently whenever the edge $(v,v')$ with endpoints $v,v'$ is an element of $E$. In this example, the specification of $V$ and $E$ forms the \emph{problem instance} and the criterion for the solution \emph{yes} (i.e., `\emph{yes, this graph is 3-colorable}') is the so called \emph{acceptance condition}. A decision problem is specified by an acceptance condition and by a set of problem instances. 

The complexity classes \emph{P} and \emph{NP} have been introduced to classify problems according to their complexity. The complexity class \emph{P} is the set of all decision problems whose complexity is a polynomial in the size of the problem instances (e.g., the number of vertices in case of \emph{3col}). The class \emph{NP} is the set of problems with the following property. Every \emph{yes}-instance admits a proof that can be checked in polynomial time. For example in case of \emph{3col}, we can prove that a graph is 3-colorable by providing an explicit 3-coloring of that graph; the correctness of that coloring can be verified by checking that for all $(v,v') \in E$, the vertices $v$ and $v'$ are colored differently. 

Intuitively, a problem $A$ is clearly harder to solve than a problem $B$ if any polynomial-time algorithm for $A$ that can be used to solve $B$ in polynomial time (we might use the algorithm for $A$ as a subroutine in another algorithm to solve $B$). This intuition is rigorously captured in the notion of reductions. We say that problem $B$ is \emph{reducible} to $A$ if there exists an algorithm $\mathcal{R}$ (polynomial-time) that maps problem instances $i$ for $B$ to problem instances $\mathcal{R}(i)$ for $A$ in such a way that 
\[
	\text{$i$ `yes' for $B$} \Leftrightarrow \text{$\mathcal{R}(i)$ `yes' for $A$.}
\]
Therefore, if there exists a polynomial-time algorithm to solve $A$ then this algorithm induces via $\mathcal{R}$ a polynomial time algorithm to solve $B$. A problem $A$ is \emph{NP-hard} if all problems $C \in \text{\emph{NP}}$ are reducible to $A$. For example, \emph{3col} is \emph{NP}-hard~\cite{garey1976some}.

A natural decision version of \emph{MinDim} is the problem \emph{Dim-$d$}.\\

{\emph{Dim-$d$}}. \emph{Instance}: $X,Y,Z \in \mathbb{N}$, $\Omega \subseteq [X] \times [Y] \times [Z]$ and scalars $\bigl( p_{xyz} \bigr)_{x,y,z \in \Omega}$. \emph{Acceptance condition}: there exist $d$-dimensional states $\rho_x$ and measurements $( E_{yz})_{z\in [Z]}$ such that $p_{x;yz} = \tr( \rho_x  E_{yz})$ for all $(x,y,z) \in \Omega$.\\ 

We note that \emph{Dim-$d$} outputs \emph{yes} if and only if the optimal solution $d_{\mathrm{MinDim}}$ of \emph{MinDim} satisfies $d_{\mathrm{MinDim}} \leq d$. Hence, \emph{MinDim} is NP-hard if \emph{Dim-3} is \emph{NP}-hard. In this work, we prove the latter by reduction to \emph{3col}. Thus, we are arriving at our main result, Theorem~\ref{Thm:main.theorem.for.complex.QM}.

\begin{theorem}\label{Thm:main.theorem.for.complex.QM}
	MinDim is NP-hard.
\end{theorem}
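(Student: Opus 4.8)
The plan is to prove Theorem~\ref{Thm:main.theorem.for.complex.QM} by establishing that \emph{Dim-3} is \emph{NP}-hard via a polynomial-time reduction from \emph{3col}, exactly as announced in the text. Given a graph $G = (V,E)$, I would construct an instance of \emph{Dim-3} — that is, dimensions $X,Y,Z$, an index set $\Omega$, and target probabilities $(p_{xyz})$ — whose acceptance depends only on whether $G$ is $3$-colorable. The guiding idea is that a $3$-coloring of $G$ is a map $V \to \{\text{red},\text{green},\text{blue}\}$, and the three colors should correspond to three mutually orthogonal rank-one projectors $\ket{r}\bra{r}, \ket{g}\bra{g}, \ket{b}\bra{b}$ in $\mathbb{C}^3$ (or the corresponding unit vectors). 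So I would create one ``vertex state'' $\rho_v$ for each $v \in V$ and design the probability constraints so that: (i) each $\rho_v$ is forced to be a rank-one projector onto one of finitely many ``allowed'' directions, and in a $3$-dimensional model those allowed directions collapse to (essentially) an orthonormal triple; and (ii) for each edge $(v,v') \in E$, a constraint of the form $\tr(\rho_v E) = $ something together with $\tr(\rho_{v'} E) = $ something (or directly a constraint coupling $\rho_v$ and $\rho_{v'}$ through a shared measurement) forces $\rho_v \neq \rho_{v'}$, i.e., the two endpoints get different colors. Conversely, any genuine $3$-coloring yields an explicit $3$-dimensional model satisfying all constraints, so the reduction is faithful in both directions.

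Concretely, the key steps, in order, would be: (1) a \textbf{gadget lemma} showing how to pin a state down to a small discrete set of candidate vectors using only trace-inner-product data with a fixed family of reference states and measurements — this is the ``rigidity'' heart of the construction and is where I expect most of the work to go; (2) using copies of this gadget, one per vertex, to guarantee that in any $3$-dimensional solution each $\rho_v$ is (up to a global unitary) one of three fixed orthonormal rank-one projectors, so a solution induces a coloring $c: V \to \{1,2,3\}$; (3) an \textbf{edge gadget} that, for each $(v,v') \in E$, adds constraints violated exactly when $c(v) = c(v')$, so that a $3$-dimensional solution exists iff a proper $3$-coloring exists; (4) checking the reduction is polynomial-time — the number of added indices and constraints is linear (or low-degree polynomial) in $|V| + |E|$, and all target probabilities are simple rationals with short descriptions; and (5) assembling: \emph{Dim-3} \emph{NP}-hard $\Rightarrow$ (as noted, since \emph{Dim-$d$} answers yes iff $d_{\mathrm{MinDim}} \le d$) \emph{MinDim} \emph{NP}-hard.

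Two soft points need care. First, in the ``only if'' direction we must rule out the escape route sketched in the introduction, where a high-dimensional model fits any data trivially: the whole point is that the gadgets become rigid only once the dimension is capped at $3$, so the argument in step~(2) must genuinely exploit $d = 3$ (orthonormality in $\mathbb{C}^3$ leaves no room for a fourth independent direction), and I should verify the gadget cannot be ``cheated'' by non-projector states or by exploiting the freedom in $\Omega^c$. Second, I must make sure every reference state/measurement used in the gadgets is itself realizable in dimension $3$, so that the ``if'' direction — plugging a bona fide coloring back in — actually produces a valid $3$-dimensional model; this is typically routine once the gadget family is chosen from an explicit finite set of vectors in $\mathbb{C}^3$.

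The main obstacle is step~(1)/(2): designing a data pattern that is simultaneously (a) rigid enough to force vertex states into a fixed orthonormal triple in the $3$-dimensional case, (b) flexible enough to be satisfiable whenever $G$ is $3$-colorable, and (c) small enough to be written down in polynomial time. This is essentially a self-testing / rigidity argument for low-dimensional quantum models, and getting the geometry exactly right — so that ``three allowed colors'' and ``dimension three'' line up and nothing collapses or over-constrains — is the crux of the proof.
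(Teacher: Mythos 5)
Your overall strategy --- reduce \emph{3col} to \emph{Dim-3} by encoding colors as mutually orthogonal rank-one projectors in $\mathbb{C}^3$, then pass to \emph{MinDim} via the observation that \emph{Dim-3} accepts iff $d_{\mathrm{MinDim}}\le 3$ --- is exactly the paper's strategy in outline. But the proposal stops short of the two ideas that actually carry the proof, and you flag both of them yourself as ``the crux'' without resolving them, so as it stands this is a plan rather than a proof. First, the rigidity gadget of your steps (1)--(3) is never constructed. The paper does not invent a new data pattern; it imports the graph gadget $H_{ij}$ of Peeters (Fig.~\ref{Fig:Peeters.gadget}), inserted between \emph{every} pair of vertices of $G$, whose rank-$3$ orthogonal representations are rigid in precisely the sense you want: any rank-$\le 3$ Gram matrix fitting the augmented graph $G'$ forces each pair of vertex subspaces to be either equal or orthogonal, so that in dimension $3$ the vertices fall into at most three mutually orthogonal classes, i.e.\ a $3$-coloring (Theorem~\ref{Thm:peeters.1st.formulation}). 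Your ``vertex gadget plus edge gadget'' picture would still need an argument of exactly this strength, and producing one from scratch is the hard combinatorial content you have deferred.

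Second, your plan does not engineer the POVM completeness constraint $\sum_z E_{yz} = I$, nor does it close the ``cheating by non-projector states'' loophole you correctly worry about. The paper handles the first by the triangle decoration $\Delta(G')$: attaching a triangle to each vertex of $G'$ forces the three associated unit vectors to form an orthonormal basis of $\mathbb{C}^3$, so the three rank-one projectors constitute a legitimate complete measurement. It handles the second by a norm argument (Lemma~\ref{Lemma:kjfwenNKer}): from $\tr(\rho_{yz}E_{yz})=1$ and Cauchy--Schwarz one gets $\|E_{yz}\|_2\ge 1$, while $\sum_z E_{yz}=I$ forces $\sum_z\|E_{yz}\|_2^2\le Z$, and together these pin down $\|E_{yz}\|_2=\|\rho_{yz}\|_2=1$ and $\rho_{yz}=E_{yz}$, so every state is pure and every effect is a rank-one projector. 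Without these two components the ``only if'' direction of your reduction is unsubstantiated: a $3$-dimensional model could a priori use non-projective effects or mixed states not aligned with any coloring. So the gap is concrete: you have correctly located the difficulty, but the gadget (Peeters' $H_{ij}$ plus triangles) and the purity/projectivity argument \emph{are} the proof, and they are missing.
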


Every experiment can be described in terms of $(\mathcal{S}_x)_x$ and $(\mathcal{O})_{yz}$. Therefore, problem~\eqref{fewk435hjfewf0} does not make any assumptions about the underlying quantum model. Often, however, we do have accept some side information about the physical system we wish to analyze. A common one is the side information that we measure a global state with local measurements~\cite{popescu1994quantum,navascues2008convergent,navascues2015non,colbeck2012free}. In this setting we want to solve the modification
\beq\label{fewk435hjssqfewf}
\begin{split}
	\text{minimize} \ & \ d \\  \text{such that} & \ \exists \text{ a $d^2$-dimensional state $\rho$ and $d$-dimensional}\\ 
				&\text{ measurements $(E_{yz})_z$ and $(F_{yz})_z$ satisfying}\\  
				&\text{ $\data_{yzy'z'}  = \tr(\rho E_{yz} \otimes F_{y'z'})$ $\forall (yzy'z') \in \Omega$}
\end{split}
\eeq
(for some $\Omega \subseteq [Y] \times [Z] \times [Y'] \times [Z']$) of \emph{MinDim}. We are referring to problem~\eqref{fewk435hjssqfewf} in terms of \emph{MinDim$^{(AB)}$}; the extra label $(AB)$ references two parties usually called Alice and Bob. Here we prove \emph{NP}-hardness of \emph{MinDim$^{(AB)}$} by showing that the natural decision problem \emph{Dim-3$^{(AB)}$} (see section~\ref{sect:problem.formulation}) of \emph{MinDim$^{(AB)}$} is \emph{NP}-hard.

\begin{theorem}\label{Thm:main.theorem.for.QM.bipartite.setting}
	MinDim$^{(AB)}$ is NP-hard.
\end{theorem}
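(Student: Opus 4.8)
\emph{Proof plan.} I would reduce \emph{Dim-3} to \emph{Dim-3$^{(AB)}$}. Since \emph{Dim-3} is NP-hard (this is what the reduction from \emph{3col} behind Theorem~\ref{Thm:main.theorem.for.complex.QM} establishes), such a reduction yields NP-hardness of \emph{Dim-3$^{(AB)}$}, and hence of \emph{MinDim$^{(AB)}$} for exactly the reason \emph{MinDim} is NP-hard: \emph{Dim-3$^{(AB)}$} accepts precisely when the optimum of~\eqref{fewk435hjssqfewf} is at most $3$. The engine of the reduction is remote state preparation: with a maximally entangled state shared between Alice and Bob, a single measurement on Bob's side prepares, conditioned on its outcome, an entire family of states on Alice's side, which Alice then probes with the measurements of the single-party instance. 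As a preliminary step I would note that \emph{Dim-3} stays NP-hard when restricted to instances containing a distinguished measurement $y_0$ with prescribed data $p_{x y_0 z}=\delta_{z,1}$ for all $x\in[X]$ and $z\in[Z]$: appending such a measurement to any instance (it is realized by $E_{y_0 1}=\id$, $E_{y_0 z}=0$ for $z\ge 2$) changes neither the answer nor, beyond a polynomial, the size.

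Given such a \emph{Dim-3} instance $(X,Y,Z,\Omega,(p_{xyz}))$, the \emph{Dim-3$^{(AB)}$} instance I would output keeps Alice's $Y$ measurements (each with $Z$ outcomes), gives Bob a single measurement $y'=1$ with $X+1$ outcomes, and sets the data to $q_{y z 1 x}:=p_{xyz}/(3X)$ for $(x,y,z)\in\Omega$, leaving all other combinations unconstrained; this map is polynomial-time.

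For the direction ``\emph{Dim-3} yes $\Rightarrow$ \emph{Dim-3$^{(AB)}$} yes'', given a $3$-dimensional model $\rho_x, E_{yz}$ of the single-party instance I would take $\ket\Phi=\tfrac1{\sqrt3}\sum_{i=1}^{3}\ket{ii}$ and $\rho=\ketbra\Phi$ (a $9$-dimensional state), retain Alice's $E_{yz}$, and set Bob's POVM to $F_{1x}=\rho_x^{\mathsf T}/X$ for $x\in[X]$ and $F_{1,X+1}=\id-\tfrac1X\sum_{x=1}^{X}\rho_x^{\mathsf T}$ --- a valid $3$-dimensional POVM because each $\rho_x^{\mathsf T}\succeq 0$ and $\bigl\|\tfrac1X\sum_x\rho_x^{\mathsf T}\bigr\|\le\tfrac1X\sum_x\|\rho_x\|\le 1$. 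The identity $\langle\Phi|A\otimes B|\Phi\rangle=\tfrac13\tr(AB^{\mathsf T})$ then yields $\tr(\rho\,E_{yz}\otimes F_{1x})=\tfrac1{3X}\tr(E_{yz}\rho_x)=q_{y z 1 x}$ on all constrained entries.

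For the converse, suppose the bipartite instance admits a $3$-dimensional model, i.e.\ a $9$-dimensional $\rho$ with $3$-dimensional local POVMs $(E_{yz})_z$ and $(F_{1 z'})_{z'}$. I would form the sub-normalized steered operators $\sigma_x:=\tr_B\bigl(\rho\,(\id\otimes F_{1x})\bigr)\succeq 0$ for $x\in[X]$. Summing the $y_0$-constraints over $z$ and using $\sum_z E_{y_0 z}=\id$ gives $\tr\sigma_x=\sum_z q_{y_0 z 1 x}=\sum_z p_{xy_0z}/(3X)=1/(3X)$, so $\rho_x:=3X\,\sigma_x$ is a genuine $3$-dimensional state and $\tr(\rho_x E_{yz})=3X\,\tr(\rho\,E_{yz}\otimes F_{1x})=3X\,q_{y z 1 x}=p_{xyz}$ for every $(x,y,z)\in\Omega$, with the $E_{yz}$ bona fide $3$-dimensional POVMs; hence the single-party instance is a yes-instance. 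I expect the delicate point to be exactly this last step: in an arbitrary bipartite model the steered weights $\tr\sigma_x$ need not be uniform, so the raw $\sigma_x$ cannot generally be rescaled into states consistent with the prescribed $p_{xyz}$, and it is the $y_0$-gadget that pins every $\tr\sigma_x$ to $1/(3X)$. The remaining ingredients --- positivity of Bob's gadget POVM, polynomial-time computability of the map, and the passage from \emph{Dim-3$^{(AB)}$} back to \emph{MinDim$^{(AB)}$} --- are routine.
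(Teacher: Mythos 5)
Your proposal is correct, but it takes a genuinely different route from the paper. The paper does not reduce \emph{Dim-3} to \emph{Dim-3$^{(AB)}$}; it reruns the whole \emph{3col} pipeline and only replaces the final sub-reduction $\mathcal{A}_3$ by $\mathcal{A}_3'$ (Lemma~\ref{lemma:fejfkjekfhkejhtkj}). The hard direction there exploits the special structure of the $\Delta(G')$ instances: the triangle normalization forces $\|E_{yz}\|,\|F_{y'z'}\|\geq 1$ (Lemma~\ref{lemma:lb.op.norm}), hence both local POVMs are rank-one orthonormal projective measurements (Lemma~\ref{lemma.fekrjf}), and a Choi--Jamiolkowski argument then shows the induced channel maps Bob's basis onto Alice's, recovering a unit-vector system fitting $\Delta(G')$. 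Your argument instead is a generic steering reduction: a maximally entangled state, Bob's single POVM built from the transposed states $\rho_x^{\mathsf T}/X$, and a trivial-measurement gadget that pins every steered weight $\tr\sigma_x$ to $1/(3X)$ so the conditional states can be rescaled consistently. Your checks are sound --- the gadget preserves yes/no instances, $F_{1,X+1}\succeq 0$ follows from $\|\rho_x\|\leq 1$, and the identity $\langle\Phi|A\otimes B|\Phi\rangle=\tfrac13\tr(AB^{\mathsf T})$ and the steering computation are correct. What each approach buys: yours is more modular and in fact supplies exactly the kind of reduction from problem~\eqref{fewk435hjfewf0} to problem~\eqref{fewk435hjssqfewf} that the paper explicitly flags as an open point right after Lemma~\ref{lemma:fejfkjekfhkejhtkj}; the paper's route, by contrast, establishes hardness for a more symmetric and restrictive family of bipartite instances (both parties perform $|V'|$ measurements with only three outcomes each), whereas your instances require one of Bob's measurements to have $X+1$ outcomes --- permitted by the definition of \emph{Dim-3$^{(AB)}$}, but a weaker statement about which correlations are hard to fit.
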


When does a physical theory qualify to be a good physical theory? Answers provided are sometimes vague. However, there is a consensus that \emph{predictive power} is a necessary criterion a good physical theory needs to satisfy. This criterion is satisfied if models drawn from that theory (e.g., quantum theory) allow for the prediction of future measurement outcomes, i.e., estimates of probabilities $p_{xyz}$ associated to pairings $(\mathcal{S}_x, \mathcal{O}_{yz})$ that have not been measured yet (i.e., $(x,y,z) \not\in \Omega$ in problem \emph{MinDim}). Therefore, considering Theorem~\ref{Thm:main.theorem.for.complex.QM} in the scenario where all probabilities $p_{xyz}$ were measured beforehand (i.e., $\Omega = [X] \times [Y] \times [Z]$) would not be very sensible because there would not be anything left to predict. A proof of hardness in that setting would, however, be of more interest in mathematical optimization where people study the optimal runtime of semidefinite program formulations of linear optimization problems~\cite{Fiorini2012,ParriloThomas2013,fawzi2014positive,ji2013binary,burgdorf2015closure}. 

Surprisingly, problem \emph{MinDim} has not yet been studied extensively~\cite{stark2014self,rosset2012imperfect,navascues2015non,stark2014compressibility,gallego2010device,harrigan2007representing,wehner2008lower,stark2012global,jackson2015detecting}. Related to \emph{MinDim} is the problem of estimating quantum processes in a way that is robust to prepare and measure errors~\cite{merkel2012self,blume2013robust,kimmel2014robust,kimmel2015robust,Dugas2014characterizing,monras2014quantum}. 

We begin by summarizing the notation (section~\ref{Sect:notation}) and the problem formulation (section~\ref{sect:problem.formulation}). We provide a proof sketch in section~\ref{sect:proof.sketch}. We prove Theorem~\ref{Thm:main.theorem.for.complex.QM} in section~\ref{sect:proof.st.m.thm} by providing separate proofs for both the complex (section~\ref{Sect:result.about.complex.case}) and real (section~\ref{Sect:result.about.real.case}) formulation of Theorem~\ref{Thm:main.theorem.for.complex.QM}. In section~\ref{sect:proof.of.bipartite.setting}, we build on top of the proof of Theorem~\ref{Thm:main.theorem.for.complex.QM} to prove Theorem~\ref{Thm:main.theorem.for.QM.bipartite.setting}. We conclude the paper in section~\ref{sect:conclusions}.

\section{Notation}\label{Sect:notation}

For any integer $n$, $[n] = \{ 1,..., n\}$. We denote graphs by $G = (V,E)$; $V$ denotes the vertex set and $E$ the edge set of the graph. For any matrix $A = (A_{ij})_{ij}$ we denote by $A^T$ its transposition and we denote by $\bar{A}$ the matrix whose $(i,j)$th entry is the complex conjugate of $A_{ij}$. Let $\Herm(\mathbb{C}^d)$ denote the space of Hermitian matrices of dimension $d$. We use $\vec{1}$ to denote the vector $(1,...,1)^T$ and $\rho \in S^+(\mathbb{C}^d)$ to denote the set of complex positive semidefinite (psd) matrices. Quantum states of dimension $d$ are specified in terms complex psd matrices with trace 1, i.e., $\rho \in S^+(\mathbb{C}^d)$ and $\tr(\rho) = 1$. A $d$-dimensional quantum description of a measurement device with $Z$ outcomes is specified in terms of psd matrices $E_1, ..., E_Z \in \Herm(\mathbb{C}^d)$ subject to the constraint $\sum_{z=1}^Z E_z = I_d$. Here, $I_d$ denotes the identity matrix on $\mathbb{C}^d$. By the so called Born's rule, the probability for obtaining outcome $z$ when measuring a state $\rho$ with the measurement $(E_z)_{z=1}^Z$ is $\tr(\rho E_z)$.

We denote by $\data_{xyz}$ the probability for measuring outcome $z$ given that we measure the state $\rho_x$ with the measurement $(E_{yz})_{z=1}^Z$. We will frequently refer to the data $(\data_{xyz})_{xyz}$ in terms of a matrix $\data \in \mathbb{R}^{X \times YZ}$,
\beq\label{def.data.D.0}
	\data =   \left( \begin{array}{ccccccc}  \data_{111} & \cdots & \data_{11Z} & \cdots &   \data_{1Y1} & \cdots & \data_{1YZ}  \\  \data_{211} & \cdots & \data_{21Z} & \cdots &   \data_{2Y1} & \cdots & \data_{2YZ} \\  \vdots &   & \vdots &   &   \vdots &   & \vdots  \\   \data_{X11} & \cdots & \data_{X1Z} & \cdots &   \data_{XY1} & \cdots & \data_{XYZ}  \end{array} \right) .
\eeq
The matrix $p$ is a flattening of $(\data_{xyz})_{xyz}$. Note that changing the row amounts to changing the state and changing the column amounts to changing the measurements outcome. The first $Z$ columns capture all the probabilities associated to the first measurement, the columns $Z+1$ to $2Z$ capture all the probabilities associated to the second measurement, etc.

\section{Problem formulation}\label{sect:problem.formulation}

Oftentimes we do not know how to describe the experimental states and measurements in terms of explicit matrices $\rho_x$ and $(E_{yz})_{z=1}^Z$. By measuring different states with different measurements we only have access to empirical distributions for obtaining outcomes $z \in [Z]$ given that we prepared state $\rho_x$ and given that we measured $(E_{yz})_{z=1}^Z$. We denote the values of the corresponding probability distributions by $\data_{xyz}$, i.e., $\mathbb{P}[z|xy] = \data_{xyz}$. By Born's rule, $\data_{xyz} = \tr(\rho_x E_{yz})$.

Hence, to find a low-dimensional quantum model for the considered experiment, we aim at solving the problem
\beq\label{fewk435hjfewf}
\begin{split}
	\min d \  \text{s.t.} & \ \exists \text{$d$-dimensional states and measurements}\\  &\text{ such that $\data_{xyz}  = \tr(\rho_x E_{yz})$ $\forall (x,y,z) \in \Omega$}.
\end{split}
\eeq
for some index set $\Omega \subset [X] \times [Y] \times [Z]$ that marks those $p_{xyz}$ that have been measured experimentally. Closely related is the problem
\beq\label{fewk435hjfewfdwdwd}
\begin{split}
	\text{minimize} \ & \ d \\  \text{such that} & \ \exists \text{ a $d^2$-dimensional state $\rho$ and $d$-dimensional}\\ 
				&\text{ measurements $(E_{yz})_z$ and $(F_{yz})_z$ satisfying}\\  
				&\text{ $\data_{yzy'z'}  = \tr(\rho E_{yz} \otimes F_{y'z'})$ $\forall (yzy'z') \in \Omega$}
\end{split}
\eeq
which appears in the study of nonlocal correlations. In the remainder we are mainly referring to the following problems:	\\

\begin{itemize}
	\item \emph{3col}. This is the following decision problem. \emph{Instance}: a graph $G = (V,E)$. \emph{Acceptance condition}: there exists a function $\mathrm{c}: V \rightarrow \{ \mathrm{r}, \mathrm{g}, \mathrm{b} \}$ such that for all $v,v' \in V$ with $( v,v' ) \in E$, $\mathrm{c}(v) \neq \mathrm{c}(v')$.
	\item \emph{Dim-$d$}. This is the following decision problem. \emph{Instance}: $X,Y,Z \in \mathbb{N}$, $\Omega \subseteq [X] \times [Y] \times [Z]$ and scalars $\bigl( p_{xyz} \bigr)_{x,y,z \in \Omega}$. \emph{Acceptance condition}: there exist $d$-dimensional states $\rho_x$ and measurements $( E_{yz})_{z\in [Z]}$ such that $p_{xyz} = \tr( \rho_x  E_{yz})$ for all $(x,y,z) \in \Omega$. 
	\item \emph{MinDim}. This is the following optimization problem. \emph{Instance}: $X,Y,Z \in \mathbb{N}$, $\Omega \subseteq [X] \times [Y] \times [Z]$ and scalars $\bigl( p_{xyz} \bigr)_{x,y,z \in \Omega}$. \emph{Objective}: see~\eqref{fewk435hjfewf}.
	\item \emph{Dim-$d^{(AB)}$}. This is the following decision problem. \emph{Instance}: $Y,Z,Y',Z' \in \mathbb{N}$, $\Omega \subseteq [Y] \times [Z] \times [Y'] \times [Z']$ and scalars $\bigl( p_{yzy'z'} \bigr)_{y,z,y'z' \in \Omega}$. \emph{Acceptance condition}: there exists a $d^2$-dimensional state $\rho$ in $\Herm(\mathbb{C}^d \otimes \mathbb{C}^d)$ and measurements $( E_{yz})_{z\in [Z]}$ and $( F_{y'z'})_{z'\in [Z']}$ such that $p_{yzy'z'} = \tr( \rho  E_{yz} \otimes F_{y'z'})$ for all $(y,z,y',z') \in \Omega$. 
	\item \emph{MinDim}$^{(AB)}$. This is the following optimization problem. \emph{Instance}: $Y,Z,Y',Z' \in \mathbb{N}$, $\Omega \subseteq [Y] \times [Z] \times [Y'] \times [Z']$ and scalars $\bigl( p_{yzy'z'} \bigr)_{y,z,y'z' \in \Omega}$. \emph{Objective}: see~\eqref{fewk435hjfewfdwdwd}.
\end{itemize}

\section{Proof sketch}\label{sect:proof.sketch}

We prove Theorem~\ref{Thm:main.theorem.for.complex.QM} by showing that \emph{Dim-3} is \emph{NP}-hard. Figure~\ref{fig:direct.reduction.to.Dim3.via.3col} sketches the strategy of our proof. We construct a sequence of reductions whose composition reduces \emph{3col} to \emph{Dim-3}. This suffices to prove the theorem because \emph{3col} is known to be \emph{NP}-hard~\cite{garey1976some}. Analogously, we prove Theorem~\ref{Thm:main.theorem.for.QM.bipartite.setting} by showing that the associated decision problem \emph{Dim-3$^{(AB)}$} is \emph{NP}-hard.

\begin{figure}[tbp]
\centering
\begin{tikzpicture}[scale=0.9]
	
	\draw[thick,rounded corners=12pt,fill=red!15] (-0.5,-2.9) rectangle (8.5,3.7);
	\node at (0.3,3.2)[]{\large \emph{NP}};
	
	\draw[thick,rounded corners=8pt,fill=blue!18] (0,0) rectangle (2,0.8);
	\node at (1,0.4)[]{\emph{3col}};
	
	\draw[thick,rounded corners=8pt,fill=blue!18] (3,0) rectangle (5,0.8);
	\node at (4,0.44)[]{\emph{rank-3}$^{\Delta}$};
	
	\draw[thick,rounded corners=8pt,fill=blue!18] (6,0) rectangle (8,0.8);
	\node at (7,0.44)[]{\emph{Dim-3}$^{\text{(pre)}}$};
	
	\draw[thick,rounded corners=8pt,fill=blue!18] (6,-1.6) rectangle (8,-2.4);
	\node at (7,-1.96)[]{\emph{Dim-3$^{(AB)}$}};
	
	\draw[thick,rounded corners=8pt,fill=blue!18] (6,2.4) rectangle (8,3.2);
	\node at (7,2.8)[]{\emph{Dim-3}};
	  
	\draw [thick,decoration={markings,mark=at position 1 with
    	{\arrow[scale=2,>=stealth]{>}}},postaction={decorate}] (2,0.4) -- (3,0.4);
	
	\node at (2.42,0.1){$\mathcal{A}_1$};
	
	\draw [thick,decoration={markings,mark=at position 1 with
    	{\arrow[scale=2,>=stealth]{>}}},postaction={decorate}] (5,0.4) -- (6,0.4);
	
	\node at (5.42,0.1){$\mathcal{A}_2$};
	
	\draw [thick,decoration={markings,mark=at position 1 with
    	{\arrow[scale=2,>=stealth]{>}}},postaction={decorate}] (7,0.8) -- (7,2.4);
	
	\node at (7.37,1.6){$\mathcal{A}_3$};
	
	\draw [thick,decoration={markings,mark=at position 1 with
    	{\arrow[scale=2,>=stealth]{>}}},postaction={decorate}] (7,0) -- (7,-1.6);
	
	\node at (7.37,-0.8){$\mathcal{A}'_3$};
	
	\node at (-0.09,1.58){$\mathcal{A}_0$};
	
	\draw[->,thick] (0.3,2.54) to [out=255,in=108] (0.4,0.83);
	\draw[->,thick] (1.3,2.6) to [out=242,in=94] (0.8,0.83);
	\draw[->,thick] (2.5,2.55) to [out=210,in=82] (1.15,0.83);
	\draw[->,thick] (0.3,-1.74) to [out=105,in=252] (0.4,-0.03);
	\draw[->,thick] (1.3,-1.8) to [out=118,in=266] (0.8,-0.03);
	\draw[->,thick] (2.5,-1.75) to [out=150,in=278] (1.15,-0.03);
	
	\draw (0.3,2.54) node[thick,cross=2.3pt,rotate=255] {};
	\draw (1.3,2.6) node[thick,cross=2.3pt,rotate=242] {};
	\draw (2.5,2.55) node[thick,cross=2.3pt,rotate=210] {};
	\draw (0.3,-1.74) node[thick,cross=2.3pt,rotate=105] {};
	\draw (1.3,-1.8) node[thick,cross=2.3pt,rotate=118] {};
	\draw (2.5,-1.75) node[thick,cross=2.3pt,rotate=150] {};

\end{tikzpicture}
\caption{Successive reduction from problems in \emph{NP} to \emph{Dim-3}.}
\label{fig:direct.reduction.to.Dim3.via.3col}
\end{figure}
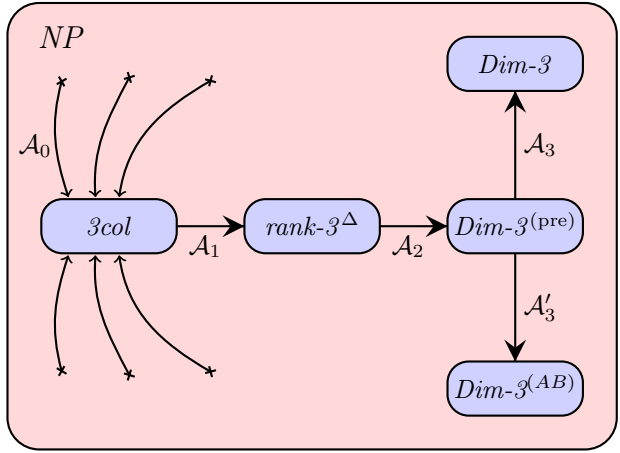

Thus, to prove Theorem~\ref{Thm:main.theorem.for.complex.QM}, we need to find a polynomial-time algorithm $\mathcal{A}$ that maps instances for \emph{3col} to instances of \emph{Dim-3} such that an instance $i$ for \emph{3col} is a \emph{yes}-instance for \emph{3col} if and only if $\mathcal{A}(i)$ is a \emph{yes}-instance for \emph{Dim-3}. As suggested by figure~\ref{fig:direct.reduction.to.Dim3.via.3col}, the reduction $\mathcal{A}$ is the composition of several partial reductions, i.e., $\mathcal{A} = \mathcal{A}_3 \circ \mathcal{A}_2 \circ \mathcal{A}_1$. Each of the parts $\mathcal{A}_1$, $\mathcal{A}_2$, $\mathcal{A}_3$ are defined in the remainder of this section. The reduction $\mathcal{A}_0$ from any problem in \emph{NP} to \emph{3col} is introduced in~\cite{garey1976some}. Consequently, reductions $\mathcal{A} \circ \mathcal{A}_0$ reduce any problem in \emph{NP} to \emph{Dim-3}. 

In section~\ref{sect:proof.st.m.thm} we provide the analysis of the algorithms $\mathcal{A}_j$ and the formal proof of Theorem~\ref{Thm:main.theorem.for.complex.QM}. Similarly, to prove Theorem~\ref{Thm:main.theorem.for.QM.bipartite.setting} we provide a reduction $\mathcal{A}' =  \mathcal{A}'_3 \circ \mathcal{A}_2 \circ \mathcal{A}_1$ from \emph{3col} to \emph{Dim-3$^{(AB)}$}. Here, the sub-reductions $\mathcal{A}_1$ and $\mathcal{A}_2$ are identical to the sub-reductions used in the proof of Theorem~\ref{Thm:main.theorem.for.complex.QM}. Only the last sub-reduction $\mathcal{A}_3$ requires modification. That modification $\mathcal{A}_3'$ and its discussion are provided in section~\ref{sect:proof.of.bipartite.setting}.

The reduction $\mathcal{A}_1$ proceeds in two steps. Firstly, for each pair of vertices $v,v'$ of $G$, it inserts subgraphs $H_{vv'}$ from~\cite{peeters1996orthogonal} (see figure~\ref{Fig:Peeters.gadget}) into $G$. We call the resulting graph $G'$. Then, in the second step, $\mathcal{A}_1$ adds triangles to each vertex of $G'$. This second operation returns a graph that we call $\Delta(G')$; see figure~\ref{Fig:example.for.triangle.decoration}. 

Using a slight modification of a Theorem from~\cite{peeters1996orthogonal} we can show that $G$ is 3-colorable if and only if there exists a Gram matrix $A$ whose rank is $\leq 3$ and whose entries are subject to linear constraints described in terms of the graph $\Delta(G')$. The decision problem \emph{rank-3}$^{\Delta}$ is defined through the question whether or not such a matrix $A$ exists.

By figure~\ref{fig:direct.reduction.to.Dim3.via.3col}, the next step in the reduction $\mathcal{A}$ is the transition to \emph{Dim-3}$^{\text{(pre)}}$. In the following definition of \emph{Dim-3}$^{\text{(pre)}}$, $y$ enumerates the vertices of the graph $G'$ and for each of these vertices, $(y1)$, $(y2)$, $(y3)$ enumerates the vertices of the triangle $\subseteq \Delta(G')$ attached to the vertex labelled by $y$.\\

\emph{Dim-3}$^{\text{(pre)}}$. \emph{Instance}: identical to \emph{rank-3}$^{\Delta}$, i.e., graphs $\Delta(G')$ where $G$ is an arbitrary graph. Hence, $\mathcal{A}_2$ is the identity function. \emph{Acceptance condition}: there exist vectors $\vec{v}_{yz} \in \mathbb{C}^3$ such that the matrix $\data$ defined by $\data_{yz,y'z'} := | \braket{v_{yz}}{v_{y'z'}} |^2$ satisfies the following: firstly, $\data_{yz,yz} = 1$ for all $(yz) \in \Delta(G')$ and secondly, $\data_{yz,y'z'} = 0$ whenever $(yz,y'z')$ is in the edge set of $\Delta(G')$. \\

The claim that $\mathcal{A}_2$ is a valid reduction is almost a direct consequence of $1^2 = 1$ and $0^2 = 0$. The last step of the reduction $\mathcal{A}$ is the transition from \emph{Dim-3}$^{\text{(pre)}}$ to \emph{Dim-3}. Instances of \emph{Dim-3} are tuples $\bigl( p_{xyz} \bigr)_{x,y,z \in \Omega}$. Hence, $\mathcal{A}_3$ must map graphs $\Delta(G')$ to such tuples. To define the action of $\mathcal{A}_3$ we proceed as in the definition of \emph{Dim-3}$^{\text{(pre)}}$ by enumerating the vertices of $G'$ by $y$. The vertices of the triangle attached to $y$ are labeled by $(yz)$ with $z = 1,2,3$. This allows us to define an index set $\Omega'$ as follows. We start by setting $\Omega'$ equal to the empty set. Then, for each edge $(yz,y'z')$ in the edge set of $\Delta(G')$, we add $\bigl( 3 (y-1) + z, y',z' \bigr)$ to $\Omega'$. Next we demand that the probabilities $\data$ from the definition of \emph{Dim-3} satisfy
\[
	\data_{x,yz} = \left\{  \begin{array}{ll}
  				1,	& \text{ if $x = 3 (y-1) + z$,}  \\
  				0,	& \text{ if $(x,y,z) \in \Omega'$}.   
\end{array}
 \right.
\]
These constraints constitute the input $\mathcal{A}_3\bigl( \Delta(G) \bigr)$ to \emph{Dim-3}. The proof of Theorem~\ref{Thm:main.theorem.for.QM.bipartite.setting} proceeds along the same lines. We only need to modify the reduction $\mathcal{A}_3$ so that the output of $\mathcal{A}_3'$ forms a valid input to \emph{Dim-3$^{(AB)}$}.

\begin{figure}[tbp]
\centering
\begin{tikzpicture}[scale=1.1]
	\draw[thick] (0,1) -- (1.5,0);
	\draw[thick] (1.5,0) -- (3,0.3);
	\draw[thick] (3,0.3) -- (0,1);
	\draw[thick] (3,0.3) -- (2.7,1.2);
	\draw[thick] (2.7,1.2) -- (1.4,1.8);
	\draw[thick] (1.4,1.8) -- (0,1);		
	
	\node at (4.4,0.8){$\longmapsto$};	
	
	\draw[thick] (6,1) -- (7.5,0);
	\draw[thick] (7.5,0) -- (9,0.3);
	\draw[thick] (9,0.3) -- (6,1);
	\draw[thick] (9,0.3) -- (8.7,1.2);
	\draw[thick] (8.7,1.2) -- (7.4,1.8);
	\draw[thick] (7.4,1.8) -- (6,1);
	
	\draw[thick] (6,1) -- (5.8,0.85);
	\draw[thick] (5.8,0.85) -- (5.8,1.15);
	\draw[thick] (5.8,1.15) -- (6,1);
	
	\draw[thick] (7.5,0) -- (7.65,-0.15);
	\draw[thick] (7.65,-0.15) -- (7.35,-0.15);
	\draw[thick] (7.35,-0.15) -- (7.5,0);
	
	\draw[thick] (9,0.3) -- (9.14,0.14);
	\draw[thick] (9.14,0.14) -- (9.22,0.43);
	\draw[thick] (9.22,0.43) -- (9,0.3);
	
	\draw[thick] (8.7,1.2) -- (9,1.35);
	\draw[thick] (9,1.35) -- (8.65,1.5);
	\draw[thick] (8.65,1.5) -- (8.7,1.2);
	
	\draw[thick] (7.4,1.8) -- (7.6,2);
	\draw[thick] (7.6,2) -- (7.3,2);
	\draw[thick] (7.3,2) -- (7.4,1.8);
\end{tikzpicture}
\caption{Illustration of triangle decoration $\Delta$ of a graph. The left hand side displays an arbitrary graph $G = (V,E)$. The right hand side shows the triangle decoration $\Delta(G) = (\Delta(V),\Delta(E))$ of $G$.}
\label{Fig:example.for.triangle.decoration}
\end{figure}
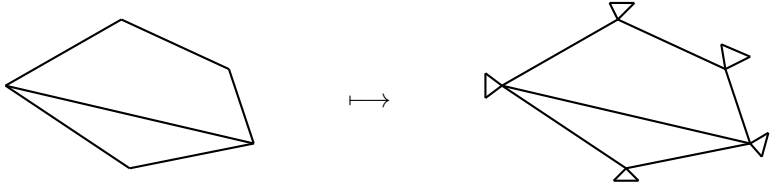

\section{Proof of Theorem~\ref{Thm:main.theorem.for.complex.QM}}\label{sect:proof.st.m.thm}

\subsection{Proof in the complex case}\label{Sect:result.about.complex.case}

Recall the reduction $\mathcal{A} = \mathcal{A}_3 \circ \mathcal{A}_2 \circ \mathcal{A}_1$ and the decision problems \emph{3col}, \emph{rank-3}$^{\Delta}$, \emph{Dim-3}$^{\text{(pre)}}$ and \emph{Dim-3} from section~\ref{sect:proof.sketch}. Here, we use the proof strategy sketched in Figure~\ref{fig:direct.reduction.to.Dim3.via.3col} to prove Theorem~\ref{Thm:main.theorem.for.complex.QM}.

\begin{definition}[see~\cite{peeters1996orthogonal}]
	From $G = (V,E)$ we construct $G' = (V',E')$ as follows. For \emph{every} pair $\{ i,j \} \subseteq V$ with $i \neq j$ we add new vertices $a_{ij},b_{ij},c_{ij},d_{ij}$ to $V$. This yields $V'$. As dictated by $H_{ij}$ from figure~\ref{Fig:Peeters.gadget}, we connect the vertices $a_{ij},b_{ij},c_{ij},d_{ij}$ to $\{ i,j \}$ and among themselves. The resulting graph is $G' = (V',E')$ with $|V'| = 4 n(n-1)/2$ and $|E'| = 9 n(n-1)/2$.
\end{definition}

\begin{figure}
\centering
\begin{tikzpicture}[scale=0.9]
	\draw[thick] (0,0) -- (0,3) ;
	\draw[thick] (0,0) -- (2,1.5);
	\draw[thick] (2,1.5) -- (0,3);
	\draw[thick] (2,1.5) -- (4,1.5);
	\draw[thick] (4,1.5) -- (6,0);
	\draw[thick] (6,0) -- (6,3);
	\draw[thick] (6,3) -- (4,1.5);
	\draw[thick] (6,3) -- (0,3);
	\draw[thick] (6,0) -- (0,0);
	\fill (0,0)  circle[radius=3pt] node[anchor=north east] {$a_{ij}$};
	\fill (0,3)  circle[radius=3pt] node[anchor=south east] {$i$};
	\fill (2,1.5)  circle[radius=3pt] node[anchor=south] {$b_{ij}$};
	\fill (4,1.5)  circle[radius=3pt] node[anchor=south] {$c_{ij}$};
	\fill (6,0)  circle[radius=3pt] node[anchor=north west] {$j$};
	\fill (6,3)  circle[radius=3pt] node[anchor=south west] {$d_{ij}$};
\end{tikzpicture}
\caption{Graph $H_{ij}$ from~\cite{peeters1996orthogonal}.}
\label{Fig:Peeters.gadget}
\end{figure}
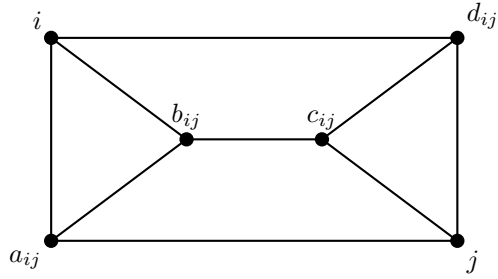

\begin{definition}[see~\cite{peeters1996orthogonal}]\label{Def:fitting.of.a.graph}
	Let $G=(V,E)$ be a graph. A matrix $A \in \mathbb{C}^{|V| \times |V|}$ is said to \emph{fit} $G$ if
	\begin{itemize}
	\item $A_{jj} = 1$ for all $j \in V$, and if
	\item	$A_{ij} = 0$ for all $(i,j) \in E$. 
	\end{itemize}
\end{definition}

\begin{theorem}[Gram matrix version of~\cite{peeters1996orthogonal}]\label{Thm:peeters.1st.formulation}
	$G$ is 3-colorable if and only if there exists a Gram matrix $A$ such that $\rank(A) \leq 3$ and such that $A$ fits $G'$.
\end{theorem}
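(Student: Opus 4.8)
The plan is to follow Peeters~\cite{peeters1996orthogonal}, carrying the argument out over $\mathbb{C}$ rather than $\mathbb{R}$; this passage to the complex field is the ``slight modification'' and it changes nothing essential, since the only linear-algebraic facts used are that three mutually orthogonal unit vectors form an orthonormal basis of $\mathbb{C}^3$ and that a $2$-dimensional subspace of $\mathbb{C}^3$ has a $1$-dimensional orthogonal complement. I would first reformulate the statement. A Gram matrix $A$ of rank $\le 3$ that fits $G'$ is exactly $A = W^{*}W$ for a matrix $W$ whose columns $\vec{w}_v \in \mathbb{C}^3$ (one per vertex $v\in V'$) satisfy $\|\vec{w}_v\| = 1$ for all $v$ and $\langle\vec{w}_i,\vec{w}_j\rangle = 0$ whenever $(i,j)\in E'$; conversely any such family of vectors produces a rank-$\le 3$ Gram matrix that fits $G'$. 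Call such a family an \emph{orthonormal representation of $G'$ in $\mathbb{C}^3$}. Then Theorem~\ref{Thm:peeters.1st.formulation} is equivalent to the assertion that $G'$ has an orthonormal representation in $\mathbb{C}^3$ if and only if $G$ is $3$-colorable.

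The technical heart --- and the step I expect to be the main obstacle --- is a lemma about one copy of the gadget $H_{ij}$ of figure~\ref{Fig:Peeters.gadget}: \emph{given unit vectors $\vec{w}_i,\vec{w}_j\in\mathbb{C}^3$, the four vertices $a_{ij},b_{ij},c_{ij},d_{ij}$ can be assigned unit vectors in $\mathbb{C}^3$ respecting all $9$ edges of $H_{ij}$ if and only if $\vec{w}_i$ and $\vec{w}_j$ are parallel or orthogonal.} For necessity, the two triangles $\{a_{ij},b_{ij},i\}$ and $\{c_{ij},d_{ij},j\}$ force $\{\vec{w}_{a_{ij}},\vec{w}_{b_{ij}},\vec{w}_i\}$ and $\{\vec{w}_{c_{ij}},\vec{w}_{d_{ij}},\vec{w}_j\}$ to be orthonormal bases of $\mathbb{C}^3$, and the three cross edges impose $\vec{w}_j\perp\vec{w}_{a_{ij}}$, $\vec{w}_{d_{ij}}\perp\vec{w}_i$ and $\vec{w}_{b_{ij}}\perp\vec{w}_{c_{ij}}$; writing $\vec{w}_j=\beta\vec{w}_{b_{ij}}+\gamma\vec{w}_i$ (legitimate since $\vec{w}_j\perp\vec{w}_{a_{ij}}$) and assuming $\vec{w}_i\not\parallel\vec{w}_j$, one checks that $\vec{w}_{a_{ij}}$ and $\vec{w}_{d_{ij}}$ both span the line $\mathrm{span}(\vec{w}_i,\vec{w}_j)^{\perp}$, hence are proportional, which pins $\vec{w}_{c_{ij}}$ down inside $\mathrm{span}(\vec{w}_{b_{ij}},\vec{w}_i)$, and the last edge $\vec{w}_{b_{ij}}\perp\vec{w}_{c_{ij}}$ then forces $\beta\gamma = 0$, i.e. $\gamma=0$ and $\vec{w}_i\perp\vec{w}_j$. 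For sufficiency one exhibits the assignment: take any orthonormal basis $\{u_1,u_2\}$ of $\vec{w}_i^{\perp}$ and set $\vec{w}_{a_{ij}}=u_1$, $\vec{w}_{b_{ij}}=u_2$; if $\vec{w}_j$ is a phase multiple of $\vec{w}_i$ set $\vec{w}_{c_{ij}}=u_1$, $\vec{w}_{d_{ij}}=u_2$, and if $\vec{w}_j\perp\vec{w}_i$ choose the basis so that $u_2=\vec{w}_j$ and set $\vec{w}_{c_{ij}}=\vec{w}_i$, $\vec{w}_{d_{ij}}=u_1$; in either case a direct check confirms all $9$ edges.

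Granting this lemma, the theorem assembles quickly. For the ``only if'' direction, any orthonormal representation $(\vec{w}_v)_{v\in V'}$ of $G'$ restricts, for each pair $i\neq j$ in $V$, to a valid assignment of the gadget $H_{ij}$, so $\vec{w}_i$ and $\vec{w}_j$ are parallel or orthogonal; thus the lines $\mathbb{C}\vec{w}_v$, $v\in V$, are pairwise equal or orthogonal, and since at most three lines in $\mathbb{C}^3$ can be pairwise orthogonal, the map $c$ sending $v$ to the label of $\mathbb{C}\vec{w}_v$ within an enumeration of this family takes values in $\{1,2,3\}$ and satisfies $c(i)\neq c(j)$ for every $(i,j)\in E\subseteq E'$ (as then $\vec{w}_i\perp\vec{w}_j$) --- a proper $3$-coloring of $G$. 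For the ``if'' direction, given a proper $3$-coloring $c$, put $\vec{w}_v = e_{c(v)}$ (standard basis of $\mathbb{C}^3$) for $v\in V$; every pair $i\neq j$ then has $\vec{w}_i,\vec{w}_j$ equal or orthogonal, so the lemma lets us fill in the four vertices of each $H_{ij}$, and since distinct gadgets share only vertices of $V$ (whose vectors were fixed first) the result is an orthonormal representation of $G'$ in $\mathbb{C}^3$; its Gram matrix has rank $\le 3$ and fits $G'$. The only place any genuine work is needed is the gadget lemma, which is a finite case analysis in $\mathbb{C}^3$ essentially identical to the one in~\cite{peeters1996orthogonal}.
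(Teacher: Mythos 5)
Your proposal is correct, and although it follows the same overall gadget strategy as the paper's own derivation (appendix~\ref{Sect:derivation.peeters}), the technical core is carried out quite differently. The paper's key step is Lemma~\ref{fwfkk4kfke}, which classifies \emph{all} rank-$3$ matrices --- not only Gram matrices --- fitting $H_{ij}$ into the two families $M$ and $M'$ by Gaussian elimination on a general $6\times 6$ matrix with the prescribed zero pattern; the parallel/perpendicular dichotomy for the lines $W_i, W_j$ is then read off, and the direction from the Gram matrix back to a coloring is finished by a case analysis on $\rank(A_V)\in\{1,2,3\}$. You instead stay inside the Gram-matrix (orthonormal-representation) picture throughout: the two triangles of $H_{ij}$ immediately yield two orthonormal bases of $\mathbb{C}^3$, and the dichotomy follows from a short dimension count ($\vec{w}_{a_{ij}}$ and $\vec{w}_{d_{ij}}$ must both span the one-dimensional space $\mathrm{span}(\vec{w}_i,\vec{w}_j)^{\perp}$ when $\vec{w}_i\not\parallel\vec{w}_j$, which then over-constrains $\vec{w}_{c_{ij}}$). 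Your completion half of the gadget lemma also absorbs the role of Corollary~\ref{fwefkljefejejwkk3} (extending a $3$-coloring of $G$ to all of $G'$), and your coloring-by-lines argument replaces the paper's rank case analysis, since pairwise equal-or-orthogonal lines in $\mathbb{C}^3$ number at most three. What you give up is the stronger classification of arbitrary rank-$3$ fitting matrices, but that generality is not needed for Theorem~\ref{Thm:peeters.1st.formulation}; what you gain is a shorter, purely geometric argument that works verbatim over $\mathbb{C}$. Two small points: your phrase ``forces $\beta\gamma=0$'' is really the conclusion $\gamma=0$ (the edge $\vec{w}_{b_{ij}}\perp\vec{w}_{c_{ij}}$ kills the $\vec{w}_{b_{ij}}$-component of $\vec{w}_{c_{ij}}$, whose coefficient is proportional to $\bar{\gamma}$), though this changes nothing since $\beta\neq 0$ under the non-parallel assumption; and you have swapped the labels ``if'' and ``only if'' relative to the statement as phrased, which is purely cosmetic.
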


In appendix~\ref{Sect:derivation.peeters} we provide a sketch of the arguments from~\cite{peeters1996orthogonal} to prove Theorem~\ref{Thm:peeters.1st.formulation}.

\begin{definition}
	Let $G =(V,E)$ be a graph. We call $\Delta(G) = (\Delta(V), \Delta(E))$ a \emph{triangle decoration} of $G$ if the following applies:
	\begin{itemize}
	\item $\Delta(V) = \bigcup_{v \in V} \{ v, q_v, t_v \}$ for some new vertices $\{ q_v, t_v \}_v$.
	\item For all $v, v' \in V$, $( v,v' ) \in \Delta(E)$ if and only if $( v,v' ) \in E$.
	\item	For all $v \in V$, $( v, q_v ) \in \Delta(E)$, $( v, t_v ) \in \Delta(E)$ and $( q_v, t_v ) \in \Delta(E)$. 
	\end{itemize}
\end{definition}

An example of a triangle decoration of a graph is shown in Fig.~\ref{Fig:example.for.triangle.decoration}.

\begin{lemma}[Reduction~$\mathcal{A}_1$]\label{Lemma:gjerkgjrlk}
	The following are equivalent.
	\begin{itemize}
	\item $G$ is 3-colorable.
	\item There exists a Gram matrix $A$ with $\rank(A) \leq 3$ such that $A$ fits $\Delta(G')$.
	\end{itemize}
\end{lemma}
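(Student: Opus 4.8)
The statement is essentially a bookkeeping step on top of Theorem~\ref{Thm:peeters.1st.formulation}, which already gives the equivalence between $3$-colorability of $G$ and the existence of a rank-$\le 3$ Gram matrix fitting $G'$. All that remains is to show that, for the purpose of the \emph{existence} of such a Gram matrix, fitting $G'$ and fitting its triangle decoration $\Delta(G')$ amount to the same thing. The key structural observation is that $\Delta(E')$ consists exactly of the edges of $G'$ together with, for each vertex $v$ of $G'$, the three edges of a triangle on $\{v,q_v,t_v\}$ whose only vertex shared with $G'$ is $v$. Thus passing from $G'$ to $\Delta(G')$ neither drops any constraint on the original vertices nor links the new vertices to anything outside their own triangle, and a vertex set $V'$ Gram submatrix can always be read off.

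For the direction ``$G$ $3$-colorable $\Rightarrow$ rank-$\le 3$ Gram matrix fitting $\Delta(G')$'': by Theorem~\ref{Thm:peeters.1st.formulation} there is a Gram matrix $A'$ with $\rank(A')\le 3$ that fits $G'$; write $A' = (\langle \vec v_i,\vec v_j\rangle)_{i,j\in V'}$ with $\vec v_i\in\mathbb{C}^3$, so that $A'_{ii}=1$ makes each $\vec v_i$ a unit vector and $A'_{ij}=0$ for $(i,j)\in E'$ makes adjacent vectors orthogonal. For each $v\in V'$ the orthogonal complement $\vec v^{\perp}\subseteq\mathbb{C}^3$ is two-dimensional, so choose an orthonormal basis $\{\vec q_v,\vec t_v\}$ of it. The Gram matrix $A$ of the enlarged family $\{\vec v_i\}_{i\in V'}\cup\{\vec q_v,\vec t_v\}_{v\in V'}$ then satisfies $\rank(A)\le 3$, has all diagonal entries equal to $1$, and has a zero entry at every edge of $\Delta(G')$: edges inside $G'$ are covered by $A'$, and for the triangle at $v$ we have $\langle\vec v,\vec q_v\rangle=\langle\vec v,\vec t_v\rangle=\langle\vec q_v,\vec t_v\rangle=0$ by construction. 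Hence $A$ fits $\Delta(G')$.

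For the converse, let $A$ be a Gram matrix with $\rank(A)\le 3$ fitting $\Delta(G')$, realized as the Gram matrix of vectors indexed by $\Delta(V')$, and let $A[V']$ be the principal submatrix indexed by $V'$. Then $A[V']$ is the Gram matrix of the subfamily indexed by $V'$, so it is positive semidefinite with $\rank(A[V'])\le\rank(A)\le 3$; moreover $A[V']_{jj}=A_{jj}=1$ for $j\in V'$ and $A[V']_{ij}=A_{ij}=0$ for $(i,j)\in E'\subseteq\Delta(E')$, so $A[V']$ fits $G'$, and Theorem~\ref{Thm:peeters.1st.formulation} yields that $G$ is $3$-colorable. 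The only point requiring any thought is the forward direction, where we use that $\mathbb{C}^3$ leaves room to complete any unit vector to an orthonormal triple; this is trivial here, but it is precisely the property that makes the triangle gadget work and that will be reused when the triangles are later reinterpreted as three-outcome projective measurements in the reductions $\mathcal{A}_2,\mathcal{A}_3$.
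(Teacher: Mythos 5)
Your proof is correct and follows essentially the same route as the paper: the forward direction completes each unit column vector of the rank-$\le 3$ Gram factorization to an orthonormal basis of $\mathbb{C}^3$ and identifies the two new vectors with $q_v,t_v$, and the converse restricts to the principal submatrix indexed by $V'$ and invokes Theorem~\ref{Thm:peeters.1st.formulation}. Your write-up is somewhat more explicit than the paper's (e.g., in noting that the new vertices are linked only within their own triangle), but the argument is the same.
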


\begin{proof}
	``$\Rightarrow$": By Theorem~\ref{Thm:peeters.1st.formulation} there exists a Gram matrix $A$ such that $\rank(A) \leq 3$ and such that $A$ fits $G'$. Let $P \in \mathbb{C}^{3 \times |V'|}$ be such that $A = \bar{P}^T P$. For each $v \in V'$ we denote by $\vec{P}_v$ the $v$-th column of $P$. Each of those column vectors has unit length and can (separately for each $v$) be completed to an orthonormal basis $\{ \vec{P}_v, \vec{P}_{q_v}, \vec{P}_{t_v} \}$ for some abstract labels $q_v$ and $t_v$. The identification of $q_v$ and $t_v$ with the vertices from $\Delta(G')$ proves the claim.
	
	``$\Leftarrow$": $V' \subset \Delta(V')$ and $E' \subset \Delta(E')$. Therefore, restricting $A$ to the submatrix of $A$ corresponding to $G'$ yields a matrix $B$ with the following properties. Firstly, $\rank(B) \leq 3$ and secondly, $B$ fits $G'$. By Theorem~\ref{Thm:peeters.1st.formulation}, this suffices to prove the claim.
\end{proof}

\begin{lemma}[Reduction~$\mathcal{A}_2$]\label{Lemma:fiehfheferf}
	The following are equivalent:
	\begin{itemize}
	\item There exists a Gram matrix $A$ with $\rank(A) \leq 3$ such that $A$ fits $\Delta(G')$.
	\item There exists $\psi_{yz} \in \mathbb{C}^3$ such that the matrix $\data$ with entries $\data_{yz;y'z'} := | \bar{\psi}_{yz}^T \psi_{y'z'} |^2$ fits $\Delta(G')$. Here, $y,y' \in [|V|]$ and $z,z' \in [3]$.
	\end{itemize}
\end{lemma}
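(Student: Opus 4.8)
The plan is to prove Lemma~\ref{Lemma:fiehfheferf} by the obvious two-directional argument, exploiting that a Gram matrix of rank $\leq 3$ is exactly the Gram matrix of a family of vectors in $\mathbb{C}^3$, and that passing between $A_{yz;y'z'} = \bar\psi_{yz}^T\psi_{y'z'}$ and $\data_{yz;y'z'} = |\bar\psi_{yz}^T\psi_{y'z'}|^2$ does not change which entries equal $1$ (on the diagonal) or $0$ (on the edges of $\Delta(G')$), since $1 = |1|^2$ and $0 = |0|^2$. The key observation making the forward direction work is that "fits" only constrains the diagonal entries (to be $1$) and the entries indexed by edges (to be $0$); it says nothing about the remaining entries, so we have freedom there.

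First I would prove ``$\Rightarrow$''. Suppose $A$ is a Gram matrix with $\rank(A)\leq 3$ fitting $\Delta(G')$. Since $A\succeq 0$ has rank $\leq 3$, write $A = \bar P^T P$ with $P\in\mathbb{C}^{3\times|\Delta(V')|}$; let $\psi_{yz}$ be the column of $P$ indexed by the vertex $(yz)$ of $\Delta(G')$ (reindexing the vertices of $\Delta(G')$ by pairs $(y,z)$ with $y$ enumerating $V'$ and $z\in[3]$ for the attached triangle, exactly as in the statement). Then $\bar\psi_{yz}^T\psi_{y'z'} = A_{yz;y'z'}$. Define $\data_{yz;y'z'} := |\bar\psi_{yz}^T\psi_{y'z'}|^2 = |A_{yz;y'z'}|^2$. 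For a diagonal entry, $A_{yz;yz}=1$ hence $\data_{yz;yz}=|1|^2=1$; for an edge $((yz),(y'z'))\in\Delta(E')$, $A_{yz;y'z'}=0$ hence $\data_{yz;y'z'}=|0|^2=0$. So $\data$ fits $\Delta(G')$, and it is of the required form.

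For ``$\Leftarrow$'', suppose $\psi_{yz}\in\mathbb{C}^3$ are such that $\data$ with $\data_{yz;y'z'}=|\bar\psi_{yz}^T\psi_{y'z'}|^2$ fits $\Delta(G')$. Set $A_{yz;y'z'}:=\bar\psi_{yz}^T\psi_{y'z'}$, i.e.\ $A = \bar P^T P$ where $P$ has columns $\psi_{yz}$; then $A\succeq 0$ and $\rank(A)\leq 3$, so $A$ is a Gram matrix of rank $\leq 3$. It remains to check $A$ fits $\Delta(G')$. From $\data_{yz;yz}=\|\psi_{yz}\|^2{}^2 = 1$ and $\|\psi_{yz}\|^2\geq 0$ we get $\|\psi_{yz}\|^2=1$, so $A_{yz;yz}=1$. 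From $\data_{yz;y'z'}=|\bar\psi_{yz}^T\psi_{y'z'}|^2=0$ on edges we get $\bar\psi_{yz}^T\psi_{y'z'}=0$, i.e.\ $A_{yz;y'z'}=0$ on edges. Hence $A$ fits $\Delta(G')$.

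I do not expect a real obstacle here; this lemma is essentially bookkeeping, and the only points needing a word of care are (i) being explicit that the rank-$\leq 3$ psd factorization $A=\bar P^T P$ exists with exactly $3$ rows in $P$ (pad with zero rows if $\rank(A)<3$), and (ii) making sure the index translation between ``vertices of $\Delta(G')$'' and pairs $(y,z)$ is pinned down once, since the same reindexing is reused in the definition of \emph{Dim-3}$^{\text{(pre)}}$ and in $\mathcal{A}_3$. Everything else is the tautology $t\mapsto |t|^2$ fixing $0$ and $1$.
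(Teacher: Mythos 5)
Your proof is correct and follows essentially the same route as the paper's: factor the rank-$\leq 3$ Gram matrix as $A=\bar P^T P$ with columns $\psi_{yz}\in\mathbb{C}^3$, pass to entrywise squared moduli, and observe that the fitting constraints only involve entries equal to $0$ or $1$, which are fixed by $t\mapsto|t|^2$ (your reverse direction is in fact spelled out slightly more carefully than the paper's one-line version). No gap; your remark on pinning down the $(y,z)$ reindexing of the vertices of $\Delta(G')$ matches the paper's use of $n=3(y-1)+z$.
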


\begin{proof}

``$\Rightarrow$": The matrix $A$ is of size $3|V'| \times 3|V'|$. Therefore, for each entry $A_{nn'}$ there exist $y,y' \in [|V'|]$ and $z,z' \in [3]$ such that $n = 3(y-1) + z$ and $n' = 3(y'-1) + z'$. This allows us to use double indices to refer to matrix elements of $A$, i.e., $A_{nn'} = A_{yz,y'z'}$. Since $A$ is a Gram matrix with rank $\leq 3$, there exists $\psi_{yz} \in \mathbb{C}^3$ such that $A_{yz,y'z'} = \bar{\psi}_{yz}^T \psi_{y'z'}$. Denote by $A' \in \AR^{3 |V'| \times 3 |V'|}$ the matrix defined by 
	\[
		(A')_{yz,y'z'} := | A_{yz,y'z'} |^2 = | \bar{\psi}_{yz}^T \psi_{y'z'} |^2.
	\] 
	By definition~\ref{Def:fitting.of.a.graph}, $A'$ fits $\Delta(G')$ because all the conditions in definition~\ref{Def:fitting.of.a.graph} are formulated in terms of entries of $A$ which are equal to $0$ and $1$. These entries are invariant under the transition $A \mapsto A'$.
	
``$\Leftarrow$": Defining $A \in \mathbb{C}^{3 |V| \times 3 |V|}$ by $A_{yz,y'z'} := \bar{\psi}_{yz}^T \psi_{y'z'}$ proves the claim.
\end{proof}

\begin{lemma}[Reduction~$\mathcal{A}_3$]\label{Lemma:kjfwenNKer}
	The following are equivalent:
	\begin{itemize}
	\item There exists $\psi_{yz} \in \mathbb{C}^3$ such that the matrix $\data$ with entries $\data_{yz;y'z'} := | \bar{\psi}_{yz}^T \psi_{y'z'} |^2$ fits $\Delta(G')$. Here, $y,y' \in [|V'|]$ and $z,z' \in [3]$.
	\item For $X = 3 |V'|$, $Y=|V'|$ and $Z=3$ there exists a 3-dimensional quantum model with the property that the matrix $\data$ defined by $\data_{x;y'z'} := \tr( \rho_{x} E_{y'z'} )$ fits $\Delta(G')$. Here, $x \in [3|V'|]$, $y' \in [|V'|]$ and $z' \in [3]$.
	\end{itemize}
\end{lemma}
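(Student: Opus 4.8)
The plan is to prove the two directions separately, exploiting that the constraints "fits $\Delta(G')$" are only about which entries of the data matrix equal $0$ or $1$. The key structural observation is that the triangle decoration has been built precisely so that every original vertex $y$ of $G'$ carries an attached triangle on vertices $(y1),(y2),(y3)$, and the three vectors $\psi_{y1},\psi_{y2},\psi_{y3}$ sitting on such a triangle are mutually orthogonal unit vectors in $\mathbb C^3$, hence form an orthonormal basis; the associated rank-one projectors $E_{yz} := \proj{\psi_{yz}}$ therefore satisfy $\sum_{z=1}^3 E_{yz} = I_3$, i.e.\ they form a genuine $3$-dimensional measurement. This is the bridge between the "vectors in $\mathbb C^3$" picture of \emph{Dim-3}$^{\text{(pre)}}$ and the "states and measurements" picture of \emph{Dim-3}.

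For ``$\Rightarrow$'': given vectors $\psi_{yz} \in \mathbb C^3$ with $|\bar\psi_{yz}^T \psi_{y'z'}|^2$ fitting $\Delta(G')$, I would first note that the diagonal condition forces $\|\psi_{yz}\| = 1$ for all $(yz)$, and the triangle edges $(y1,y2),(y1,y3),(y2,y3) \in \Delta(E')$ force $\{\psi_{y1},\psi_{y2},\psi_{y3}\}$ to be orthonormal for each $y$. Then I set $E_{y'z'} := \proj{\psi_{y'z'}}$ (a valid measurement by the previous sentence) and, for $x = 3(y-1)+z$, I set $\rho_x := \proj{\psi_{yz}}$ (a valid $3$-dimensional state). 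By Born's rule $\tr(\rho_x E_{y'z'}) = |\bar\psi_{yz}^T \psi_{y'z'}|^2 = \data_{yz,y'z'}$, so the quantum-model data matrix equals the one from the hypothesis and in particular fits $\Delta(G')$ with the stated dimensions $X = 3|V'|$, $Y = |V'|$, $Z = 3$.

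For ``$\Leftarrow$'': given a $3$-dimensional quantum model $\rho_x, (E_{y'z'})_{z'\in[3]}$ whose data matrix $\data_{x,y'z'} = \tr(\rho_x E_{y'z'})$ fits $\Delta(G')$, I want to extract vectors $\psi_{yz}\in\mathbb C^3$. The natural move is to read the index $x = 3(y-1)+z$ back as a double index and show $\rho_{3(y-1)+z}$ must be a rank-one projector. The diagonal constraint gives $\tr(\rho_{3(y-1)+z} E_{yz}) = 1$; since $0 \preceq E_{yz} \preceq I_3$ (each $E_{yz}$ is one term of a measurement) and $\rho_{3(y-1)+z}$ is a unit-trace state, this forces $\rho_{3(y-1)+z}$ to be supported on the eigenspace of $E_{yz}$ with eigenvalue $1$; combined with the triangle-edge zero constraints $\tr(\rho_{3(y-1)+z}E_{yz'}) = 0$ for $z'\neq z$ and $\sum_{z'}E_{yz'} = I_3$, one deduces that $E_{y1},E_{y2},E_{y3}$ are mutually orthogonal rank-one projectors summing to $I_3$ and that $\rho_{3(y-1)+z} = E_{yz}$. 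Writing $E_{yz} = \proj{\psi_{yz}}$ for a unit vector $\psi_{yz}$ then yields $\data_{yz,y'z'} = \tr(\rho_{3(y-1)+z}E_{y'z'}) = |\bar\psi_{yz}^T\psi_{y'z'}|^2$, and this matrix fits $\Delta(G')$ because the fitting conditions only constrain the $0$/$1$ entries, which are preserved. The main obstacle — and the only place needing care — is this last direction: rigorously arguing that the trace constraints force each $\rho_x$ to be the rank-one projector $E_{yz}$ and that the $E_{yz}$'s are simultaneously diagonalizable rank-one projectors, rather than merely some psd matrices with the right traces. Once that rigidity is established, the equivalence is immediate.
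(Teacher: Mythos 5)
Your proposal is correct, and the forward direction coincides with the paper's (the paper simply sets $\rho_{3(y-1)+z} = E_{yz} = \psi_{yz}\bar{\psi}_{yz}^T$; you additionally spell out why the triangle constraints make $\{\psi_{y1},\psi_{y2},\psi_{y3}\}$ orthonormal so that $\sum_z E_{yz}=I_3$ is a genuine measurement, a detail the paper leaves implicit). For the backward direction, however, you take a genuinely different route to the same rigidity conclusion. The paper argues via Frobenius norms: from $\tr(\rho_{yz}E_{yz})=1$ and Cauchy--Schwarz it gets $\|E_{yz}\|_2\ge 1$, then expands $Z=\|\sum_z E_{yz}\|_2^2$ and uses self-duality of the psd cone to force $\|E_{yz}\|_2=1=\|\rho_{yz}\|_2$ for all $z$, whence saturation of Cauchy--Schwarz gives $\rho_{yz}=\kappa E_{yz}$ with $\kappa=1$. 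You instead use the operator-ordering argument: $\tr(\rho E)=1$ with $0\preceq E\preceq I$ (the upper bound coming from $\sum_{z'}E_{yz'}=I_3$) forces $\tr(\rho(I-E))=0$, hence $\rho$ lives in the eigenvalue-$1$ eigenspace of $E$; a trace count over the three outcomes ($3=\sum_z\tr(E_{yz})\ge\sum_z\tr(P_{yz})\ge 3$, where $P_{yz}$ projects onto that eigenspace) then pins each $E_{yz}$ down as a rank-one projector equal to $\rho_{3(y-1)+z}$. Both arguments are sound and of comparable length; yours is arguably more elementary (no Frobenius-norm bookkeeping, no case distinction on whether some $\|E_{yz^*}\|_2>1$) and makes the geometric mechanism — saturation of $\tr(\rho E)\le\|E\|$ — more transparent, while the paper's version is the one it reuses almost verbatim for the real case and (in the guise of Lemma~\ref{lemma.fekrjf}) for the bipartite setting. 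To turn your plan into a complete proof you would only need to write out the trace-counting step you flagged; as sketched it contains no gap.
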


\begin{proof}

``$\Rightarrow$": This is the easy direction. Setting $\rho_{3(y-1) + z} := \psi_{yz} \bar{\psi}_{yz}^T$ and $E_{yz} := \psi_{yz} \bar{\psi}_{yz}^T$ proves the claim.

``$\Leftarrow$": As in the proof of Lemma~\ref{Lemma:fiehfheferf} we replace the $x$-index of states with a $(yz)$-index so that $\rho_x = \rho_{3(y-1) + z} = \rho_{yz}$. By Cauchy-Schwarz
\beq\label{fej43jifdnnfwefew}
	\| \rho_{yz} \|_2 \| E_{yz} \|_2 \geq \tr(\rho_{yz} E_{yz}) = 1
\eeq
for all $y \in [|V'|]$ and $z \in [3]$. It follows that 
\beq\label{feij3iojwfefjfe}
	\| E_{yz} \|_2 \geq 1
\eeq
because $\| \sigma \|_2 \in [1/\sqrt{Z},1]$ for all $Z$-dimensional quantum states $\sigma$. By summation to the identity of measurements and by the self-duality of the cone of positive semidefinite matrices,
	\beq\begin{split}\nn
		Z 
		&= \| I \|^2_2 = \Bigl\| \sum_{z=1}^Z E_{yz} \|_2^2 \\
		&= \Bigl( \sum_{z=1}^Z \| E_{yz} \|_2^2 \Bigr) + \sum_{z \neq z'} \tr(E_{yz} E_{yz'}).
	\end{split}\eeq
	Therefore,
	\beq\label{fwjh453ksk}
		Z \geq \sum_{z=1}^Z \| E_{yz} \|_2^2 
	\eeq
	because $\tr(MN) \geq 0$ for any positive semidefinite matrices $M,N$. By~\eqref{fwjh453ksk}, there exists $z'$ such that $\| E_{yz'} \|_2 \leq 1$. Assume there exists $z^*$ such that $\| E_{yz^*} \|_2 > 1$. Then, by~\eqref{feij3iojwfefjfe},
	\beq\begin{split}\nn
		Z 
		&\geq \sum_{z=1}^Z \| E_{yz} \|_2^2 \geq \| E_{yz^*} \|_2^2 + (Z-1) \min_z \| E_{yz} \|^2_2 \\
		&>  Z \min_z \| E_{yz} \|^2_2 \geq Z.
	\end{split}\eeq
This is impossible and therefore, $\| E_{yz} \|_2 = 1$ for all $y \in [| V' |]$ and $z \in [3]$. By~\eqref{fej43jifdnnfwefew},
\beq
	\| \rho_{yz} \|_2 = 1,
\eeq
i.e., all states are pure and the Cauchy-Schwarz inequality~\eqref{fej43jifdnnfwefew} is satisfied with equality. This happens if and only if there exists $\kappa \in \AR$ such that $\rho_{yz} = \kappa E_{yz}$. From $\| \rho_{yz} \|_2 = \| E_{yz} \|_2$ we conclude that $\kappa \in \{ \pm 1 \}$. The possibility $\kappa = -1$ can be ruled out because both $\rho_{yz}$ and $E_{yz}$ are positive semidefinite. We conclude that for all $y \in [|V'|]$ and $z \in [3]$ there exist unit vectors $\vec{\psi}_{yz} \in \mathbb{C}^3$ such that
\beq\label{lfklek5dwdwdw469kflrk}
	| \bar{\psi}_{yz}^T \psi_{y'z'} |^2 = \data_{yz,y'z'}.
\eeq
This proves the claim because by assumption, $\data$ fits $\Delta(G')$.

\end{proof}

\begin{corollary}\label{Corollary:fjkerww3JHfe}
	The following are equivalent:
	\begin{itemize}
	\item $G$ is 3-colorable.
	\item For $X = 3 |V'|$, $Y=|V'|$ and $Z=3$ there exists a 3-dimensional quantum model with the property that the matrix $\data \in \AR^{3|V'| \times 3|V'|}$ defined by $\data_{yz;y'z'} := \tr( \rho_{yz} E_{y'z'} )$ fits $\Delta(G')$.
	\end{itemize}
\end{corollary}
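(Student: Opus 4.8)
The plan is to obtain Corollary~\ref{Corollary:fjkerww3JHfe} simply by chaining together the three equivalences already established in Lemmas~\ref{Lemma:gjerkgjrlk}, \ref{Lemma:fiehfheferf} and \ref{Lemma:kjfwenNKer}. Concretely, I would start from ``$G$ is 3-colorable'', apply Lemma~\ref{Lemma:gjerkgjrlk} to rewrite this as the existence of a Gram matrix $A$ with $\rank(A)\leq 3$ fitting $\Delta(G')$, then apply Lemma~\ref{Lemma:fiehfheferf} to rewrite \emph{that} as the existence of vectors $\psi_{yz}\in\mathbb{C}^3$ with $\data_{yz,y'z'} := |\bar\psi_{yz}^T\psi_{y'z'}|^2$ fitting $\Delta(G')$, and finally apply Lemma~\ref{Lemma:kjfwenNKer} to rewrite \emph{that} as the existence of a 3-dimensional quantum model with $X=3|V'|$, $Y=|V'|$, $Z=3$ whose data matrix $\data_{yz,y'z'} := \tr(\rho_{yz}E_{y'z'})$ fits $\Delta(G')$. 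Since each of the three lemmas is a biconditional, the composite statement is also a biconditional, which is exactly the claim of the corollary.

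The only genuinely substantive point to check is that the ``middle'' objects match up across the lemma boundaries, so that the chain is legitimate. The hypothesis side of Lemma~\ref{Lemma:fiehfheferf} is literally the conclusion side of Lemma~\ref{Lemma:gjerkgjrlk} (``there exists a Gram matrix $A$ with $\rank(A)\leq 3$ such that $A$ fits $\Delta(G')$''), so that junction is immediate. At the second junction one must note that the statement ``there exists $\psi_{yz}\in\mathbb{C}^3$ such that the matrix $\data$ with entries $\data_{yz,y'z'}:=|\bar\psi_{yz}^T\psi_{y'z'}|^2$ fits $\Delta(G')$'' is verbatim the conclusion of Lemma~\ref{Lemma:fiehfheferf} and verbatim the hypothesis of Lemma~\ref{Lemma:kjfwenNKer}; the only cosmetic discrepancy is the index renaming $x = 3(y-1)+z$ versus a plain double index $yz$, which is the same reindexing already used inside the proof of Lemma~\ref{Lemma:kjfwenNKer}, so it introduces nothing new. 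Thus the corollary follows with no additional computation.

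I do not anticipate a real obstacle here; the work has all been done in the three lemmas. If I were to flag a ``hard part'' it would only be the bookkeeping: being careful that $|V'| = 4n(n-1)/2$ is the vertex count that propagates consistently through all three reductions, and that the quantum model produced in the forward direction (the pure states and rank-one effects $\rho_{3(y-1)+z} = \psi_{yz}\bar\psi_{yz}^T$, $E_{yz}=\psi_{yz}\bar\psi_{yz}^T$) indeed sums to the identity on $\mathbb{C}^3$ for each fixed $y$, which is guaranteed because $\{\vec P_v,\vec P_{q_v},\vec P_{t_v}\}$ was constructed as an orthonormal basis in the proof of Lemma~\ref{Lemma:gjerkgjrlk}. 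Hence I would present the proof as a two-line composition: ``Combine Lemma~\ref{Lemma:gjerkgjrlk}, Lemma~\ref{Lemma:fiehfheferf} and Lemma~\ref{Lemma:kjfwenNKer}.''
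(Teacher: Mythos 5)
Your proposal is correct and matches the paper's own proof exactly: the corollary is obtained by chaining the biconditionals of Lemma~\ref{Lemma:gjerkgjrlk}, Lemma~\ref{Lemma:fiehfheferf} and Lemma~\ref{Lemma:kjfwenNKer}. Your additional checks that the intermediate statements align at the lemma boundaries are sound bookkeeping and introduce nothing beyond what the paper's one-line proof implicitly relies on.
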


\begin{proof}
	The claim is the straightforward combination of the statements of Lemma~\ref{Lemma:gjerkgjrlk}, Lemma~\ref{Lemma:fiehfheferf} and Lemma~\ref{Lemma:kjfwenNKer}.
\end{proof}

\begin{corollary}\label{Corollary:fewfjs3jkjcc}
	Dim-3 is NP-hard.
\end{corollary}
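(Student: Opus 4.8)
The plan is to read off from Corollary~\ref{Corollary:fjkerww3JHfe} an explicit polynomial-time reduction $\mathcal{A} = \mathcal{A}_3 \circ \mathcal{A}_2 \circ \mathcal{A}_1$ from \emph{3col} to \emph{Dim-3}, and then invoke the \emph{NP}-hardness of \emph{3col}~\cite{garey1976some}. Given a graph $G = (V,E)$ with $n = |V|$, the step $\mathcal{A}_1$ first forms $G'$ by inserting the Peeters gadget $H_{ij}$ for every pair $\{ i,j \} \subseteq V$ with $i \neq j$, and then forms the triangle decoration $\Delta(G')$; the step $\mathcal{A}_2$ leaves the instance unchanged. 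To carry out $\mathcal{A}_3$, I would enumerate the vertices of $G'$ by $y \in [|V'|]$, label the triangle attached to $y$ by $(y1),(y2),(y3)$, set $X = 3|V'|$, $Y = |V'|$, $Z = 3$, and build the index set $\Omega'$ together with the values $( \data_{x,yz} )_{(x,y,z) \in \Omega'}$ as in section~\ref{sect:proof.sketch}: for every vertex $(yz)$ of $\Delta(G')$ put $\bigl( 3(y-1)+z,\, y,\, z \bigr) \in \Omega'$ with value $1$, and for every edge $\bigl( (yz),(y'z') \bigr)$ of $\Delta(G')$ put $\bigl( 3(y-1)+z,\, y',\, z' \bigr) \in \Omega'$ with value $0$. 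The tuple $(X,Y,Z,\Omega',\data)$ is the \emph{Dim-3} instance $\mathcal{A}(G)$.

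First I would check that $\mathcal{A}$ runs in polynomial time. The map $G \mapsto G'$ adds $4\binom{n}{2}$ vertices and $9\binom{n}{2}$ edges; $G' \mapsto \Delta(G')$ adds two vertices and three edges per vertex of $G'$; and $\Omega'$ together with $\data$ contains one entry per vertex and per edge of $\Delta(G')$. All these quantities are $O(n^2)$, each entry is produced in constant time, and a composition of polynomially bounded maps is polynomially bounded.

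Correctness is where the preceding lemmas do the work. After relabelling the row index of a $3$-dimensional quantum model as $x = 3(y-1)+z$, the matrix $\data$ with entries $\data_{yz,y'z'} = \tr(\rho_{3(y-1)+z} E_{y'z'})$ agrees with the prescribed values on $\Omega'$ exactly when $\data_{yz,yz} = 1$ for every vertex $(yz)$ of $\Delta(G')$ and $\data_{yz,y'z'} = 0$ for every edge $\bigl( (yz),(y'z') \bigr)$ of $\Delta(G')$, i.e.\ exactly when $\data$ \emph{fits} $\Delta(G')$ in the sense of Definition~\ref{Def:fitting.of.a.graph}. Hence $\mathcal{A}(G)$ is a \emph{yes}-instance of \emph{Dim-3} if and only if there exists a $3$-dimensional quantum model whose $\data$ matrix fits $\Delta(G')$, and by Corollary~\ref{Corollary:fjkerww3JHfe} this holds if and only if $G$ is $3$-colorable. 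Since \emph{3col} is \emph{NP}-hard, \emph{Dim-3} is \emph{NP}-hard as well; combined with the observation that \emph{Dim-3} accepts exactly when the optimum of \emph{MinDim} is at most $3$, this also yields Theorem~\ref{Thm:main.theorem.for.complex.QM}.

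I do not expect a genuine obstacle here; the one point needing care is that the \emph{yes}-direction must produce a \emph{bona fide} quantum model, which is already built into Corollary~\ref{Corollary:fjkerww3JHfe}. The triangle edges $\bigl( (y1),(y2) \bigr)$, $\bigl( (y1),(y3) \bigr)$, $\bigl( (y2),(y3) \bigr)$ force the corresponding columns $\psi_{y1},\psi_{y2},\psi_{y3}$ of a Gram factorization of rank $\leq 3$ to be an orthonormal basis of $\mathbb{C}^3$, so $\rho_{3(y-1)+z} := \psi_{yz}\bar{\psi}_{yz}^T$ are legitimate pure states and $(E_{yz})_z := (\psi_{yz}\bar{\psi}_{yz}^T)_z$ is a legitimate measurement with $\sum_{z} E_{yz} = I_3$; the converse implication is exactly the non-trivial content of Lemma~\ref{Lemma:kjfwenNKer}, which forces all states pure and $\rho_{yz} = E_{yz}$. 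What remains for a full proof is therefore just the bookkeeping above.
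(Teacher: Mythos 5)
Your proposal is correct and follows the same route as the paper: the paper's own proof of this corollary is a one-liner invoking Corollary~\ref{Corollary:fjkerww3JHfe} together with the \emph{NP}-completeness of \emph{3col}, and everything you spell out (the explicit construction of $\Omega'$ and the data values, the polynomial-size bound, the relabelling $x = 3(y-1)+z$ that identifies "agreeing with the prescribed values on $\Omega'$" with "fitting $\Delta(G')$") is exactly the bookkeeping the paper delegates to section~\ref{sect:proof.sketch} and the chain of Lemmas~\ref{Lemma:gjerkgjrlk}--\ref{Lemma:kjfwenNKer}. Your explicit check that the reduction runs in polynomial time is a welcome addition the paper leaves implicit, but it does not change the argument.
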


\begin{proof}
	Corollary~\ref{Corollary:fjkerww3JHfe} reduces \emph{3col} to \emph{Dim-3}. Therefore, \emph{Dim-3} is \emph{NP}-hard because \emph{3col} is \emph{NP}-complete~\cite{garey1976some}.
\end{proof}

Corollary~\ref{Corollary:fewfjs3jkjcc} is sufficient to prove \emph{NP}-hardness of \emph{MinDim} because \emph{Dim-3} can be reduced to \emph{MinDim} by checking whether or not the optimal dimension computed by \emph{MinDim} is $\leq 3$. This concludes the proof of Theorem~\ref{Thm:main.theorem.for.complex.QM}.

\subsection{Proof in the real case}\label{Sect:result.about.real.case}

We provide a separate proof for the natural real-valued formulation of Theorem~\ref{Thm:main.theorem.for.complex.QM} because this proof of Theorem~\ref{Thm:main.theorem.for.complex.QM} is more self-contained and is based on a reduction from the partition problem (instead of \emph{3col}). Hence, the goal of this section is to prove that~\eqref{fewk435hjfewf} is \emph{NP}-hard in the real case. I.e., we want to show that~\eqref{fewk435hjfewf} is \emph{NP}-hard when the quantum states and measurements are enforced to be matrices with real matrix entries. This leads to the consideration of the real-valued variants of \emph{Dim-3} and \emph{MinDim}:
\begin{itemize}
\item \emph{$\mathbb{R}$-Dim-$d$}. This is the natural real variant of the decision problem \emph{Dim-$d$}, i.e., the underlying Hilbert space is a real vector space.
\item \emph{$\mathbb{R}$-MinDim}. This is the natural real variant of the optimization problem \emph{MinDim}, i.e., the underlying Hilbert space is a real vector space.
\end{itemize}

The real formulation of Theorem~\ref{Thm:main.theorem.for.complex.QM} is a corollary of the following Theorem~\ref{Thm:main.result.vers2}.

\begin{theorem}\label{Thm:main.result.vers2}
	The decision problem $\AR$-Dim-$d$ is NP-hard.
\end{theorem}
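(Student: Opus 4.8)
The plan is to prove Theorem~\ref{Thm:main.result.vers2} by a polynomial-time reduction from \emph{Partition} (given positive integers $a_1,\dots,a_n$, decide whether there is $\vec\epsilon\in\{\pm1\}^n$ with $\sum_{i=1}^n\epsilon_i a_i=0$), which is NP-complete. This shows $\mathbb{R}$-Dim-$2$ is NP-hard, and since $\mathbb{R}$-Dim-$2$ trivially reduces to $\mathbb{R}$-MinDim (ask whether the optimal dimension is $\le 2$), the real formulation of Theorem~\ref{Thm:main.theorem.for.complex.QM} follows as a corollary. As in the complex case the reduction factors into a \emph{geometric step} and a \emph{rigidity step}.

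Geometric step. Fix the real number $\gamma=1$ (any fixed $\gamma>0$ with $\gamma/\pi\notin\mathbb{Q}$ works). I claim that \emph{Partition}$(a)$ is a yes-instance if and only if there exist unit vectors $u_0,\dots,u_n\in\mathbb{R}^2$ with
\[
\langle u_{i-1},u_i\rangle^2=\cos^2(\gamma a_i)\quad(i\in[n]),\qquad \langle u_0,u_n\rangle^2=1.
\]
The forward implication: put $u_i=R\!\left(\gamma\sum_{j\le i}\epsilon_j a_j\right)u_0$ for a fixed unit vector $u_0$ and a rotation $R$. The converse: in $\mathbb{R}^2$ the constraint $\langle u_{i-1},u_i\rangle^2=\cos^2(\gamma a_i)$ forces $u_i=\pm R(\pm\gamma a_i)\,u_{i-1}$, so composing gives $u_n=\pm R\!\left(\gamma\sum_i\epsilon_i a_i\right)u_0$ for some $\vec\epsilon\in\{\pm1\}^n$; then $\langle u_0,u_n\rangle^2=1$ forces $\gamma\sum_i\epsilon_i a_i\in\pi\mathbb{Z}$, and since $\sum_i\epsilon_i a_i$ is an integer while $\gamma/\pi$ is irrational this forces $\sum_i\epsilon_i a_i=0$. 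This number-theoretic closure is the conceptual heart of the argument; everything else is routine.

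Rigidity step (the analogue of Lemma~\ref{Lemma:kjfwenNKer}). Encode the vector problem as a $\mathbb{R}$-Dim-$2$ instance with $X=2n+1$, $Y=n$, $Z=2$: states $\rho_0,\dots,\rho_n$ (intended $u_iu_i^T$), auxiliary states $\tilde\rho_1,\dots,\tilde\rho_n$, two-outcome measurements $M_i=(G_i,\,I-G_i)$ for $i\in[n]$ (intended $G_i=u_iu_i^T$), and the data
\[
\tr(\rho_i G_i)=1,\quad \tr(\tilde\rho_i G_i)=0\quad(i\in[n]);\qquad \tr(\rho_{i-1}G_i)=\cos^2(\gamma a_i)\quad(i\in[n]);\qquad \tr(\rho_0 G_n)=1.
\]
If the $u_i$ exist, then $\rho_i:=u_iu_i^T$, $\tilde\rho_i:=(u_i^\perp)(u_i^\perp)^T$, $G_i:=u_iu_i^T$ is a $2$-dimensional real model fitting this data. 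Conversely, in any fitting $2$-dimensional real model, $\tr(\rho_i G_i)=1$ forces $\rho_i$ to be supported on the eigenvalue-$1$ eigenspace of the measurement operator $G_i$, and adding $\tr(\tilde\rho_i G_i)=0$ removes the residual freedom, so $G_i$ is a rank-one orthogonal projector, $\rho_i=G_i$ and $\tilde\rho_i=I-G_i$ — the same positivity/Cauchy–Schwarz reasoning used in Lemma~\ref{Lemma:kjfwenNKer}. Writing $G_i=u_iu_i^T$ and $\rho_0=u_0u_0^T$ (the latter forced pure by $\tr(\rho_0 G_n)=1$), the remaining constraints read exactly $\langle u_{i-1},u_i\rangle^2=\cos^2(\gamma a_i)$ and $\langle u_0,u_n\rangle^2=1$, i.e. a solution of the vector problem. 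Composing with the geometric step, $a\mapsto(\text{this data set})$ is a polynomial-time reduction of \emph{Partition} to $\mathbb{R}$-Dim-$2$, which proves Theorem~\ref{Thm:main.result.vers2}.

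The point that needs care is the interface with the definition of the decision problem: the data values $\cos^2(\gamma a_i)=\tfrac12\bigl(1+\cos(2\gamma a_i)\bigr)$ are specific irrational reals, so one reads $\mathbb{R}$-Dim-$d$ as a problem with real-valued inputs (natural here, since the inputs model measured probabilities) — equivalently, each datum is represented by the pair $(\gamma,a_i)$, keeping the instance size polynomial; since exactness of these values is what makes the closure argument work, they cannot simply be replaced by rational approximations. The only other thing requiring a careful write-up is the rigidity claim in full generality — ruling out $G_i=I$, a non-projector $G_i$, or a mixed $\rho_i$ — but this is a direct transcription of the estimates already carried out in Lemma~\ref{Lemma:kjfwenNKer}. (If the statement is intended for a general fixed $d\ge2$, one additionally appends a gadget pinning the configuration into a two-dimensional subspace; $d=2$ already suffices for the NP-hardness of $\mathbb{R}$-MinDim.)
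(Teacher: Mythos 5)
Your high-level plan coincides with the paper's: both reduce from \emph{Partition}, and both use the same rigidity mechanism (trace-one constraints plus positivity forcing all effects to be rank-one projectors and all states to be pure, so that the data collapses to squared inner products of unit vectors). But your reduction has a genuine gap at exactly the point you flag yourself: the instance you construct contains the numbers $\cos^2(\gamma a_i)$, which for $\gamma=1$ and integer $a_i$ are transcendental. An \emph{NP}-hardness reduction must output, in polynomial time, a string that is a legal instance of the target decision problem; the scalars $p_{xyz}$ in the definition of $\AR$-Dim-$d$ are data written on the input tape, so they must have finite (rational/binary) representations. Re-encoding each datum ``symbolically as the pair $(\gamma,a_i)$'' does not fix this --- it silently replaces $\AR$-Dim-$d$ by a different promise problem with a nonstandard input format, and the hardness you obtain is for that variant, not for the problem as defined. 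Since you also (correctly) observe that the exact values are essential --- the closure argument $\sum_i\epsilon_i a_i\in\pi\mathbb{Z}\Rightarrow\sum_i\epsilon_i a_i=0$ dies under any rational perturbation --- there is no easy repair of this route: the whole mechanism rests on encoding the integers $a_i$ into angles, which is incompatible with rational data.

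The paper's proof avoids this entirely by encoding the partition instance into \emph{rational} probabilities: it sets $v_j=c_j^2$ and places the row $\vec v/\|\vec v\|_1$ (entries $c_j^2/\sum_k c_k^2$) and the row $\vec 1/Z$ into a partially specified $2Z\times 2Z$ data matrix containing two copies of $I_Z$. The rigidity step (the same Cauchy--Schwarz/Frobenius-norm argument you invoke) forces a rank-$Z$ Gram matrix of unit vectors whose entries are signed square roots $s_{yz,y'z'}\sqrt{p_{yz,y'z'}}$, and the orthogonality $\vec\psi_{2,1}^{\,T}\vec\psi_{2,2}=0$ expands into $\sum_j \hat s_j c_j=0$ with $\hat s_j\in\{\pm1\}$ --- the signs of the square roots, rather than angles on a circle, carry the partition. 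Note also that the paper's reduction lands in dimension $d=Z$ (growing with the instance), whereas you aim at fixed $d=2$; a working rational-data reduction to fixed dimension would be a stronger statement, but the construction you propose does not deliver it.
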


Inspired by the proof~\cite{fawzi2014positive} of \emph{NP}-hardness of square root-rank, we are going to prove Theorm~\ref{Thm:main.result.vers2} by reducing the partition problem to $\AR$-Dim-$d$;\\

\emph{Partition problem.} This is the following decision problem. \emph{Instance}: $c_1, ..., c_Z \in \mathbb{N}$.  \emph{Acceptance condition}: accept if and only if there exist signs $s_1, ..., s_Z \in \{ \pm 1 \}$ such that $\sum_{j=1}^Z s_j c_j = 0$.\\

The reduction from the partition problem to $\AR$-Dim-$d$ suffices to prove the claim because the partition problem is \emph{NP}-complete~\cite{garey1979computers}.

\begin{proof}[Proof of Theorem~\ref{Thm:main.result.vers2}]
We first introduce the reduction $\mathcal{A}$ that maps problem instances of the partition problem to problem instances of $\AR$-Dim-$d$. Let $\vec{c} \in \mathbb{N}^Z$ be an input to the partition problem. Define $\vec{v}  \in \mathbb{N}^Z$ such that $(\vec{v})_j = (\vec{c})^2_j$ and set
\beq\label{fwefk34jrk}
	\mathcal{A}(\vec{c}) :=
	\left( \begin{array}{cccc}  I_Z & \vec{v}/\| \vec{v} \|_1 & \vec{1}/Z & *  \\  \vec{v}^T/\| \vec{v} \|_1 &  &  &   \\  \vec{1}^T/Z &   & I_Z &     \\  * &  &  &    \end{array} \right).
\eeq
More precisely, $\mathcal{A}(\vec{c})$ is the flattening~\eqref{def.data.D.0} of the input $\bigl( p_{xyz} \bigr)_{x,y,z \in \Omega}$ to $\AR$-Dim-$d$. The asterisks mark entries $\not\in \Omega$. To prove the Theorem~\ref{Thm:main.result.vers2} we need to show that $\vec{c}$ is a yes-instance for the partition problem if and only if $\mathcal{A}(\vec{c})$ is a yes-instance for $\AR$-Dim-$d$.

``$\Rightarrow$": Assume $\vec{c}$ is a \emph{yes}-instance for the partition problem. Let $\vec{s} \in \{ \pm 1 \}^Z$ be the valid sign configuration corresponding to $\vec{c}$, i.e., $\sum_j (\vec{s})_j (\vec{c})_j = 0$. Let $(\vec{e}_j)_{j = 1}^Z$ the canonical basis in $\mathbb{R}^Z$. For each $j \in [Z]$ define $\vec{\psi}_j = \vec{e}_j$, $\vec{\varphi}_2 = \vec{1}/\sqrt{Z}$ and for each $n \in [Z]$ set
\[	
	(\vec{\varphi}_1)_n := (\vec{s})_n (\vec{c})_n / \| \vec{c} \|_2.
\]
By construction,
\[
	\vec{\varphi}_1^T \vec{\varphi}_2 = \frac{1}{\| \vec{c} \|_2 \sqrt{Z}} \sum_{n=1}^Z (\vec{s})_n (\vec{c})_n = 0
\]
and $\| \vec{\varphi}_1 \|_2 = \| \vec{\varphi}_2 \|_2 = 1$. Therefore, $\{ \vec{\varphi}_1, \vec{\varphi}_2 \}$ can be completed to an orthonormal basis $\{ \vec{\varphi}_j \}_{j=1}^Z$ in $\mathbb{R}^Z$. For $x \in [Z]$ set $\rho_x = \vec{\psi}_x \vec{\psi}_x^T$ and for $x \in [2Z] \backslash [Z]$ define $\rho_x = \vec{\varphi}_{x-Z} \vec{\varphi}_{x-Z}^T$. For $z \in [Z]$ define $E_{1z} = \vec{\psi}_z \vec{\psi}_z^T$ and $E_{2z} = \vec{\varphi}_z \vec{\varphi}_z^T$. It follows that the states $(\rho_x)_{x=1}^{2Z}$ and the measurements $(E_{1z})_z$ and $(E_{2z})_z$ define a valid quantum model for $\AR$-Dim-$d$. We conclude that the reduction~\eqref{fwefk34jrk} maps \emph{yes}-instances for the partition problem to \emph{yes}-instances for $\AR$-Dim-$d$.

``$\Leftarrow$": Assume $\mathcal{A}(\vec{c})$ is a yes instance for $\AR$-Dim-$d$. In the remainder we use the double index $(1,x)$ to refer to $x$ if $x \in [Z]$ and we use the double index $(2,x-Z)$ to refer to $x$ if $x \in [2Z] \backslash [Z]$. For instance, $\rho_{Z+2} = \rho_{2,2}$. This leads to a symmetric labeling of the rows and the columns. By Cauchy-Schwarz
\beq\label{fej43jifdnn}
	\| \rho_{yz} \|_2 \| E_{yz} \|_2 \geq \tr(\rho_{yz} E_{yz}) = 1
\eeq
for all $y \in [2]$ and $z \in [Z]$. It follows that 
\beq\label{feij3iojjfe}
	\| E_{yz} \|_2 \geq 1
\eeq
because $\| \sigma \|_2 \in [1/\sqrt{Z},1]$ for all $Z$-dimensional quantum states $\sigma$. By $\sum_{z=1}^Z E_z = I_d$ and by self-duality of the cone of positive semidefinite matrices, 
	\beq\begin{split}\label{fkj456}
		Z 
		&= \| I \|^2_2 = \Bigl\| \sum_{z=1}^Z E_{yz} \|_2^2 \\
		&= \Bigl( \sum_{z=1}^Z \| E_{yz} \|_2^2 \Bigr) + \sum_{z \neq z'} \tr(E_{yz} E_{yz'})
		\geq \sum_{z=1}^Z \| E_{yz} \|_2^2 .
	\end{split}\eeq
	By~\eqref{fkj456}, there exists $z'$ such that $\| E_{yz'} \|_2 \leq 1$. Assume there exists $z^*$ such that $\| E_{yz^*} \|_2 > 1$. Then, by~\eqref{feij3iojjfe},
	\beq\begin{split}\nn
		Z 
		&\geq \sum_{z=1}^Z \| E_{yz} \|_2^2 \geq E_{yz^*} + (Z-1) \min_z \| E_{yz} \|^2_2 \\
		&>  Z \min_z \| E_{yz} \|^2_2 \geq Z.
	\end{split}\eeq
	This is impossible and therefore, $\| E_{yz} \|_2 = 1$ for all $y \in [2]$ and $z,z' \in [Z]$. By~\eqref{fej43jifdnn},
\beq
	\| \rho_{yz} \|_2 = 1,
\eeq
i.e., all states are pure and the Cauchy-Schwarz inequality~\eqref{fej43jifdnn} is satisfied with equality. This happens if and only if there exists $\kappa \in \AR$ such that $\rho_{yz} = \kappa E_{yz}$. From $\| \rho_{yz} \|_2 = \| E_{yz} \|_2$ we conclude that $\kappa \in \{ \pm 1 \}$. The possibility $\kappa = -1$ can be ruled out because both $\rho_{yz}$ and $E_{yz}$ are positive semidefinite. We conclude that for all $y \in [2]$ and $z \in [Z]$ there exist unit vectors $\vec{\psi}_{yz} \in \mathbb{R}^Z$ such that
\beq\label{lfklek5469kflrk}
	( \vec{\psi}_{yz}^T \vec{\psi}_{y'z'} )^2 = \mathcal{A}(\vec{c})_{yz,y'z'}.
\eeq
This implies that there exist signs $s_{yz,y'z'} \in \{ \pm 1 \}$ such that 
\[
	\rank\Bigl(  \bigl( s_{yz,y'z'} \sqrt{\mathcal{A}(\vec{c})_{yz,y'z'}}  \bigr)_{yz,y'z'} \Bigr) \leq Z
\]
and therefore, 
\beq\label{fdi345jl2k3}
	\rank\Bigl(  \bigl( s_{yz,y'z'} \sqrt{\mathcal{A}(\vec{c})_{yz,y'z'}}  \bigr)_{yz,y'z'} \Bigr) = Z
\eeq
because $\bigl( \sqrt{\mathcal{A}(\vec{c})_{yz,y'z'}} \bigr)_{yz,y'z'}$ contains $I_Z$ as submatrix. Moreover, by~\eqref{lfklek5469kflrk}, $\bigl( s_{yz,y'z'} \sqrt{\mathcal{A}(\vec{c})_{yz,y'z'}}  \bigr)_{yz,y'z'}$ is the Gram matrix of the vectors $(\vec{\psi}_{yz})_{yz}$ and thus, symmetric. Denote by $\mathcal{A}(\vec{c})'$ the submatrix of $\bigl( s_{yz,y'z'} \sqrt{\mathcal{A}(\vec{c})_{yz,y'z'}}  \bigr)_{yz,y'z'}$ formed by its first $Z+2$ rows and columns, i.e.,
\beq\label{flklklkekf}
	\mathcal{A}(\vec{c})' :=
	\left( \begin{array}{ccc}  I_d & \vec{x} & \vec{y}  \\  \vec{x}^T  &  1 & 0  \\  \vec{y} &  0 & 1 \end{array} \right),
\eeq
where $\vec{x}$ is a $Z$-dimensional vector satisfying
\[
	\bigl( \vec{x} \bigr)_j = s_{1,j;2,1} (\vec{c})_j / \| \vec{c} \|_2 = \vec{\psi}_{1,j}^T \vec{\psi}_{2,1}
\]
for all $j \in [Z]$ and $\vec{y}$ is a $Z$-dimensional vector satisfying
\[
	\bigl( \vec{y} \bigr)_j = s_{1,j;2,2}  / \sqrt{Z} = \vec{\psi}_{1,j}^T \vec{\psi}_{2,2}.
\]
for all $j \in [Z]$. Therefore,~\eqref{flklklkekf} implies that
\beq\begin{split}
	0 
	&= 	\vec{\psi}_{2,1}^T \vec{\psi}_{2,2} = \vec{\psi}_{2,1}^T I_Z \vec{\psi}_{2,2} = \sum_{j=1}^Z \bigl( \vec{\psi}_{2,1}^T \vec{\psi}_{1,j} \bigr) \bigl( \vec{\psi}_{1,j}^T  \vec{\psi}_{2,2} \bigr)\\
	&=	 \sum_{j=1}^Z (\vec{x})_j (\vec{y})_j = \frac{1}{\sqrt{Z} \; \| \vec{c} \|_2} \sum_{j=1}^Z s_{1,j;2,1} \; (\vec{c})_j \; s_{1,j;2,2}.
\end{split}\eeq
It follows that
\[
	0 = \sum_{j=1}^Z \hat{s}_{j} \; (\vec{c})_j 
\]
if we define
\[
	\hat{s}_j := s_{1,j;2,1} \; s_{1,j;2,2} \in \{ \pm 1 \}
\]
for all $j \in [Z]$. We conclude that $\vec{c}$ is a \emph{yes}-instance for the partition problem. This concludes the proof of the Theorem because we have shown that $\vec{c}$ is a \emph{yes}-instance for the partition problem if and only if $\mathcal{A}(\vec{c})$ is a yes instance for $\AR$-Dim-$d$.
\end{proof}

\section{Proof of Theorem~\ref{Thm:main.theorem.for.QM.bipartite.setting}}\label{sect:proof.of.bipartite.setting}

Recall the decision problems \emph{3col}, \emph{rank-3}$^{\Delta}$, \emph{Dim-3}$^{\text{(pre)}}$ and \emph{Dim-3$^{(AB)}$} from section~\ref{sect:proof.sketch}. Here, we use the proof strategy sketched in Figure~\ref{fig:direct.reduction.to.Dim3.via.3col} to prove Theorem~\ref{Thm:main.theorem.for.QM.bipartite.setting} in terms of a reduction $\mathcal{A}' = \mathcal{A}'_3 \circ \mathcal{A}_2 \circ \mathcal{A}_1$. Lemma~\ref{Lemma:gjerkgjrlk} and Lemma~\ref{Lemma:fiehfheferf} prove that $\mathcal{A}_2 \circ \mathcal{A}_1$ is a valid reduction. It is left to show that $\mathcal{A}'_3$ is a valid reduction from \emph{Dim-3}$^{\text{(pre)}}$ to \emph{Dim-3$^{(AB)}$}.

\begin{lemma}[Reduction~$\mathcal{A}_3'$]\label{lemma:fejfkjekfhkejhtkj}
	The following are equivalent.
	\begin{itemize}
	\item There exists $\psi_{yz} \in \mathbb{C}^3$ such that the matrix $\data$ defined by $\data_{yz;y'z'} := | \bar{\psi}_{yz}^T \psi_{y'z'} |^2$  fits $\Delta(G')$. Here, $y,y' \in |V'|$ and $z,z' \in [3]$.	
	\item For $Z=Z'=3$ and $Y = Y' = |V'|$ there exists a bipartite $3$-dimensional quantum model with the property that the matrix $M$ defined by $M_{yz;y'z'} := 3 \, \tr( \rho E_{yz} \otimes F_{y'z'} )$ fits $\Delta(G')$.
	\end{itemize}
\end{lemma}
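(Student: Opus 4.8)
The plan is to establish the two implications separately. ``First $\Rightarrow$ second'' is a short computation with the maximally entangled state; ``second $\Rightarrow$ first'' is a rigidity argument in the spirit of the proof of Lemma~\ref{Lemma:kjfwenNKer}, and carries almost all of the work. For the forward direction, note first that because $\data$ fits $\Delta(G')$ and each vertex $y$ of $G'$ carries a triangle, $\data_{yz;yz}=1$ and $\data_{yz;yz'}=0$ for $z\neq z'$, so $\{\psi_{y1},\psi_{y2},\psi_{y3}\}$ is an orthonormal basis of $\mathbb{C}^3$ for every $y$. I would then take $\rho:=\proj{\Phi}$ with $\ket{\Phi}:=\tfrac{1}{\sqrt3}\sum_{i=1}^{3}\ket{ii}$, and for every $y,z$ set $E_{yz}:=\psi_{yz}\bar\psi_{yz}^T$ and $F_{yz}:=\overline{E_{yz}}=\bar\psi_{yz}\psi_{yz}^T$. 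Orthonormality gives $\sum_z E_{yz}=\sum_z F_{yz}=I_3$, so these are valid measurements, and using $\bra{\Phi}A\otimes B\ket{\Phi}=\tfrac13\tr(AB^T)$ together with $F_{y'z'}^T=E_{y'z'}$ one gets $M_{yz;y'z'}=3\tr(\rho\,E_{yz}\otimes F_{y'z'})=\tr(E_{yz}E_{y'z'})=|\bar\psi_{yz}^T\psi_{y'z'}|^2=\data_{yz;y'z'}$, so $M=\data$ fits $\Delta(G')$.

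For ``second $\Rightarrow$ first'', suppose a bipartite $3$-dimensional model is given. The triangle edges give $\tr(\rho\,E_{yz}\otimes F_{yz})=\tfrac13$ and $\tr(\rho\,E_{yz}\otimes F_{yz'})=0$ for $z\neq z'$; since $\tr(\rho P)=0$ with $P\succeq0$ forces $\rho P=0$, we get $\rho(E_{yz}\otimes F_{yz'})=0$ for $z\neq z'$, and summing against $\sum_{z'}F_{yz'}=I$ (and against $\sum_{z'}E_{yz'}=I$) yields $\rho(E_{yz}\otimes I)=\rho(I\otimes F_{yz})=\rho(E_{yz}\otimes F_{yz})$. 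Taking $\tr_B$ of $\rho(E_{yz}\otimes I)=\rho(I\otimes F_{yz})$ and observing $\tr_B[\rho(I\otimes F_{yz})]=\tr_B[(I\otimes F_{yz}^{1/2})\rho(I\otimes F_{yz}^{1/2})]\succeq0$, I obtain $\rho_A E_{yz}=\tr_B[\rho(E_{yz}\otimes I)]\succeq0$, so $\rho_A$ commutes with $E_{yz}$. Moreover $\tr(\rho_A E_{yz}E_{yz'})=\tr(\rho\,E_{yz'}\otimes F_{yz})=0$ for $z\neq z'$ (again a triangle edge), so expanding $\tr(\rho_A)=\tr\!\big(\rho_A(\textstyle\sum_z E_{yz})^2\big)$ gives $\sum_z\tr(\rho_A E_{yz}^2)=1=\sum_z\tr(\rho_A E_{yz})$; since $E_{yz}^2\preceq E_{yz}$ and $[\rho_A,E_{yz}]=0$, this forces $\rho_A(E_{yz}-E_{yz}^2)=0$ termwise. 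Hence on the support of $\rho_A$ every $E_{yz}$ restricts to a projector; three nonzero projectors summing to the identity of a space of dimension $\leq3$ force that support to be all of $\mathbb{C}^3$ and each $E_{yz}=\proj{\psi_{yz}}$ rank one, and then (as $\rho_A$ commutes with this rank-one projective measurement) $\rho_A=\sum_z\tr(\rho_A E_{yz})\,E_{yz}=I_3/3$. Symmetrically $\rho_B=I_3/3$ and $F_{yz}=\proj{\phi_{yz}}$ with $\{\phi_{yz}\}_z$ orthonormal.

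It remains to show $M_{yz;y'z'}=|\bar\psi_{yz}^T\psi_{y'z'}|^2$. I would apply local unitaries $U_A\otimes U_B$, which alter none of the quantities above, to arrange $E_{y_0z}=F_{y_0z}=\proj{z}$ for a single reference vertex $y_0$. The relation $\rho(\proj{z}\otimes I)=\rho(I\otimes\proj{z})$ then forces $\rho$ to be supported on the diagonal subspace $\mathcal{D}:=\SPAN\{\ket{aa}\}_a$, so $\rho=V\sigma V^\dagger$ with the isometry $V\ket{a}=\ket{aa}$ and $\sigma$ a state on $\mathbb{C}^3$ with $\sigma_{aa}=\tfrac13$. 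A short computation gives $\tr(\rho\,E_{yz}\otimes F_{y'z'})=\tr\!\big(\sigma\,(E_{yz}\circ F_{y'z'})\big)$, where $\circ$ is the entrywise product, and $\rho(E_{yz}\otimes I)=\rho(I\otimes F_{yz})$ becomes, entrywise, $(F_{yz})_{ab}=3\sigma_{ab}\overline{(E_{yz})_{ab}}$ together with $(E_{yz})_{ab}\big(\tfrac13-3|\sigma_{ab}|^2\big)=0$; since $|\sigma_{ab}|\leq\sqrt{\sigma_{aa}\sigma_{bb}}=\tfrac13$, the latter says $(E_{yz})_{ab}=0$ whenever $a\neq b$ and $|\sigma_{ab}|<\tfrac13$. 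Substituting, $M_{yz;y'z'}=3\tr(\sigma(E_{yz}\circ F_{y'z'}))=9\sum_{ab}|\sigma_{ab}|^2(E_{yz})_{ba}\overline{(E_{y'z'})_{ba}}$, and every surviving summand carries the factor $9|\sigma_{ab}|^2=1$, whence $M_{yz;y'z'}=\sum_{ab}(E_{yz})_{ba}\overline{(E_{y'z'})_{ba}}=\tr(E_{yz}E_{y'z'})=|\bar\psi_{yz}^T\psi_{y'z'}|^2$. Therefore $\data:=M$ fits $\Delta(G')$, which closes the converse.

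The main obstacle is the converse, and within it the last step. The point is that $\rho$ is \emph{not} forced to be the maximally entangled state: it can be any state supported on the diagonal subspace, ranging from fully classical to maximally entangled, and correspondingly $F_{yz}$ need not equal $\overline{E_{yz}}$, so one cannot simply quote a self-testing result. It is precisely the entrywise-product identity for $\tr(\rho\,E_{yz}\otimes F_{y'z'})$ combined with the dichotomy $|\sigma_{ab}|\in\{<\tfrac13\}\cup\{=\tfrac13\}$ that collapses $M$ to the modulus-squared Gram matrix in every case. One must also verify carefully that the chain $\rho_AE_{yz}\succeq0\Rightarrow[\rho_A,E_{yz}]=0\Rightarrow$ rank-one projective measurement is airtight, and that the local-unitary normalization loses no generality. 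Everything else — the forward direction and the extraction of rank-one projectors — runs parallel to the reasoning already used for Lemma~\ref{Lemma:kjfwenNKer}.
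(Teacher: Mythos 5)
Your proof is correct, but the converse direction follows a genuinely different route from the one in the paper. The paper first forces the POVMs to be rank-one orthonormal projective measurements via operator-norm lower bounds (Lemma~\ref{lemma:lb.op.norm} and Lemma~\ref{lemma.fekrjf}), and then invokes the Choi--Jamiolkowski isomorphism: viewing $\rho$ as the Choi state of a completely positive map $\mathcal{E}$, the triangle constraints $\delta_{zz'}/d = \tr(\rho\, E_{yz}\otimes F_{yz'})$ pin down $\mathcal{E}(\ketbra{\mu^{y}_{z'}}) = \ketbra{\eta^{y}_{z'}}$, after which the cross terms $M_{yz;y'z'} = \tr[\ketbra{\eta^y_z}\,\mathcal{E}(\ketbra{\mu^{y'}_{z'}})] = |\braket{\eta^y_z}{\eta^{y'}_{z'}}|^2$ come out in one line, with no need to determine $\rho$ itself. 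You instead reach rank-one projectiveness through a commutation argument ($\rho_A E_{yz}\succeq 0$, hence $[\rho_A,E_{yz}]=0$, hence $E_{yz}-E_{yz}^2$ and $E_{yz}E_{yz'}$ vanish on $\mathrm{supp}(\rho_A)$, which a counting argument then shows is all of $\mathbb{C}^3$), derive $\rho_A=\rho_B=I/3$, localize $\rho$ to the diagonal subspace after a local unitary, and finish with the Hadamard-product identity $F_{ab}=3\sigma_{ab}\overline{E_{ab}}$ together with the dichotomy $E_{ab}\bigl(\tfrac13-3|\sigma_{ab}|^2\bigr)=0$; I checked these entrywise relations and the cancellation $9|\sigma_{ab}|^2=1$ on surviving terms, and they hold. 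Your route is longer but more elementary (no channel formalism) and yields as a byproduct a full structural characterization of every model reproducing the data; it also correctly registers that $\rho$ need not be maximally entangled. Two points deserve spelling out if you write this up: the step from three nonzero projectors on $\mathrm{supp}(\rho_A)$ to rank-one projectors needs the mutual orthogonality $\rho_A E_{yz}E_{yz'}=0$ (which follows from $\tr(\rho_A E_{yz}E_{yz'})=0$ and $\rho_A E_{yz}\succeq 0$) plus the fact that each $E_{yz}$ preserves $\mathrm{supp}(\rho_A)$ and satisfies $\tr(\rho_A E_{yz})=1/3>0$; and in the forward direction your choice $F_{yz}=\overline{E_{yz}}$ (rather than $F_{yz}=E_{yz}$) is the one that makes $\bra{\Phi}A\otimes B\ket{\Phi}=\tfrac13\tr(AB^T)$ produce exactly $|\bar\psi_{yz}^T\psi_{y'z'}|^2$ --- a small conjugation point worth keeping explicit.
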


\begin{proof} We prove the lemma for general dimensions $d$.	The statement of the lemma can be reproduced by setting $d=3$.

``$\Rightarrow$": We set $\rho = \ketbra{\Omega}$ with $\ket{\Omega} = \frac{1}{\sqrt{d}} \sum_{j=1}^d \ket{jj}$. Moreover, we define $E_{yz} = F_{yz} = \ketbra{\psi_{yz}}$ for all $y \in [|V'|]$ and $z \in [3]$. Then,
\beq\begin{split}\label{dwjk4dfeh5jkhd}
	\tr\bigl( \rho \, E_{yz} \otimes F_{y'z'} \bigr)
	&=	\frac{1}{d} | \braket{\psi_{yz}}{\psi_{y'z'}} |^2.
\end{split}\eeq
Hence, by assumption, $d \, \tr( \rho E^y_z \otimes F^{y'}_{z'} )$ fits $\Delta(G')$.

``$\Leftarrow$": Fix $y,y' \in [Y]$. By Lemma~\ref{lemma:lb.op.norm},	
\[
	1 = \frac{p_{y'z'y'z'}}{\sum_{j=1}^d p_{y'z'y'j}} \leq \| F_{y'z'} \|.
\]
for all $z' \in [d]$ and
\[
	1 = \frac{p_{yzyz}}{\sum_{j=1}^d p_{yjyz}} \leq \| E_{yz} \|.
\]
for all $z \in [d]$. By Lemma~\ref{lemma.fekrjf}, there exist orthonormal bases $(\eta^y_j)_j$ and $(\mu^{y'}_j)_j$ of $\mathbb{C}^d$ with the property that
\[
	\text{$E_{yz} = \ketbra{\eta^y_z}$ and $F_{y'z'} = \ketbra{\mu^{y'}_{z'}}$} 
\]
for all $z,z' \in [d]$. By assumption, the quantum model fits $\Delta(G')$. It follows that the matrix with entries
\beq\label{fweft4t4f}
	M_{yz;y'z'} = d \, \tr( \rho \ketbra{\eta^y_z} \otimes \ketbra{\mu^{y'}_{z'}})
\eeq
fits $\Delta(G')$. In particular,
\[
	\frac{\delta_{zz'}}{d} = \tr( \rho \ketbra{\eta^y_z} \otimes \ketbra{\mu^{y}_{z'}})
\]
for all $y \in [Y]$.

We can interpret $\rho$ as the Choi-Jamiolkowski state of a completely positive map $\mathcal{E}: \mathrm{Herm}(\mathbb{C}^d) \rightarrow \mathrm{Herm}(\mathbb{C}^d)$. Then, by Proposition~2.1 in~\cite{wolf2012quantum},
\beq\begin{split}\label{dwjk4h5jkhd}
	\frac{\delta_{zz'}}{d} 
	&=	\tr\Bigl[ \rho \ketbra{\eta^y_z} \otimes \ketbra{\mu^{y}_{z'}} \Bigr]\\
	&=	\frac{1}{d} \tr\Bigl[ \ketbra{\eta^y_z} \; \mathcal{E}\bigl( \ketbra{\mu^{y}_{z'}} \bigr) \Bigr] .
\end{split}\eeq
Thus,
\beq\label{fewfkvjkfer}
	\delta_{zz'} = \bra{\eta^y_z}  \mathcal{E}\bigl(  \ketbra{\mu^y_{z'}} \bigr) \ket{\eta^y_z} 
\eeq
and therefore,
\beq\label{fe43543fef}
	\mathcal{E}\bigl(  \ketbra{\mu^y_z} \bigr) = \ketbra{\eta^y_z}
\eeq
That is because of the following. Let $\lambda^{yz'}_j$ and $\ket{\lambda^{yz'}_j}$ be such that $\mathcal{E}\bigl(  \ketbra{\mu^y_{z'}} \bigr) = \sum_j \lambda^{yz'}_j \ketbra{\lambda^{yz'}_j}$. There necessarily exists $j^* \in [d]$ is such that $\lambda^{yz'}_{j^*} > 0$. Otherwise, $\mathcal{E}\bigl(  \ketbra{\mu^y_{z'}} \bigr) = 0$ in contradiction with~\eqref{fewfkvjkfer}. It follows that for all $z \in [d]$ with $z \neq z'$,
\beq\begin{split}\nn
	0 
	&= 	\bra{\eta^y_z}  \mathcal{E}\bigl(  \ketbra{\mu^y_{z'}} \bigr) \ket{\eta^y_z} 
	\geq	\bra{\eta^y_z}  \Bigl( \lambda^{yz'}_{j^*}  \ketbra{\lambda^{yz'}_{j^*}} \Bigr)  \ket{\eta^y_z}  \geq 0.
\end{split}\eeq
Therefore, $\braket{\lambda^{yz'}_{j^*}}{\eta^y_z} = 0$ for all $z \in [d]$ with $z \neq z'$. By orthonormality of the basis $(\ket{\eta^y_z})_j$ and normality of $\ket{\lambda^{yz}_{j^*}}$ this implies that $\ket{\lambda^{yz}_{j^*}} = e^{i \theta_{yz}} \ket{\eta^y_{z'}}$ where $e^{i \theta_{yz}}$ denotes a phase factor. In particular, $\mathcal{E}\bigl(  \ketbra{\mu^y_{z'}} \bigr)$ must be rank-1, i.e., there exists $\lambda^{yz'} > 0$ such that $\mathcal{E}\bigl(  \ketbra{\mu^y_{z'}} \bigr) = \lambda^{yz'} \ketbra{\eta^y_{z'}}$. Therefore,~\eqref{fe43543fef} is implied by~\eqref{fewfkvjkfer} with $z=z'$.

To conclude the proof we go back to~\eqref{fweft4t4f}. As in~\eqref{dwjk4h5jkhd},
\[
	M_{yz;y'z'} = \tr\Bigl[ \ketbra{\eta^y_z} \; \mathcal{E}\bigl( \ketbra{\mu^{y'}_{z'}} \bigr) \Bigr].
\]
By~\eqref{fe43543fef},
\beq\begin{split}\nn
	M_{yzy'z'}
	&=	\tr\Bigl[ \ketbra{\eta^y_z} \; \ketbra{\eta^{y'}_{z'}}\Bigr] 
	=	| \braket{\eta^y_z}{\eta^{y'}_{z'}}|^2 .
\end{split}\eeq
This suffices to prove the claim because $M$ fits $\Delta(G')$ by assumption.

\end{proof}

We expect the proof idea behind Lemma~\ref{lemma:fejfkjekfhkejhtkj} to be of interest in proving that problem~\eqref{fewk435hjfewf0} admits a reduction to problem~\eqref{fewk435hjssqfewf}.

\begin{corollary}\label{Corollary:fjkerwjJHfe}
	The following are equivalent:
	\begin{itemize}
	\item $G$ is 3-colorable.
	\item For $Z=Z'=3$ and $Y = Y' = |V'|$ there exists a bipartite $3$-dimensional quantum model with the property that the matrix $M$ defined by $M_{yz;y'z'} := 3 \, \tr( \rho E_{yz} \otimes F_{y'z'} )$ fits $\Delta(G')$.
	\end{itemize}
\end{corollary}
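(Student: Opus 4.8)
The plan is to obtain the corollary by composing the three biconditionals that make up the bipartite reduction $\mathcal{A}' = \mathcal{A}_3' \circ \mathcal{A}_2 \circ \mathcal{A}_1$, in exact analogy with the proof of Corollary~\ref{Corollary:fjkerww3JHfe} in the single-system setting. First I would invoke Lemma~\ref{Lemma:gjerkgjrlk} to rewrite ``$G$ is 3-colorable'' as ``there exists a Gram matrix $A$ with $\rank(A) \leq 3$ that fits $\Delta(G')$''. Next I would invoke Lemma~\ref{Lemma:fiehfheferf} to rewrite that as ``there exist $\psi_{yz} \in \mathbb{C}^3$, with $y \in [|V'|]$ and $z \in [3]$, such that the matrix $\data$ with entries $\data_{yz;y'z'} = |\bar{\psi}_{yz}^T \psi_{y'z'}|^2$ fits $\Delta(G')$''. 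Finally I would invoke Lemma~\ref{lemma:fejfkjekfhkejhtkj} (reduction $\mathcal{A}_3'$) to rewrite that last statement as the desired bipartite quantum model statement for $Z = Z' = 3$, $Y = Y' = |V'|$ and $M_{yz;y'z'} = 3\,\tr(\rho E_{yz} \otimes F_{y'z'})$. Since each of the three steps is an ``if and only if'', chaining them yields precisely the claimed equivalence, so the proof is essentially a single sentence: the claim follows by composing Lemma~\ref{Lemma:gjerkgjrlk}, Lemma~\ref{Lemma:fiehfheferf} and Lemma~\ref{lemma:fejfkjekfhkejhtkj}.

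No new mathematical content is required; the substance lives entirely in the three lemmas, and in particular in the ``$\Leftarrow$'' direction of Lemma~\ref{lemma:fejfkjekfhkejhtkj}, where the Choi--Jamiolkowski representation of $\rho$ is used to force $\mathcal{E}(\ketbra{\mu^y_{z'}}) = \ketbra{\eta^y_{z'}}$ and thereby recover rank-one vectors realizing $M$ as a squared-overlap matrix. The only obstacle, such as it is, is bookkeeping: I must check that the index conventions ($y,y'$ enumerating the vertices of $G'$ and $z,z' \in [3]$ enumerating the triangle attached to each such vertex), the meaning of ``fits $\Delta(G')$'' in the sense of Definition~\ref{Def:fitting.of.a.graph}, and the normalization constant (the factor $d = 3$) agree verbatim across the three lemma statements, so that the output of each is literally a valid input to the next. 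Once this is verified the composition is immediate.

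As a consequence — which I would record in the following corollary rather than here — the map $\mathcal{A}' = \mathcal{A}_3' \circ \mathcal{A}_2 \circ \mathcal{A}_1$ is a polynomial-time reduction from \emph{3col} to \emph{Dim-3$^{(AB)}$}, so \emph{Dim-3$^{(AB)}$}, and hence \emph{MinDim$^{(AB)}$}, is \emph{NP}-hard; this is the route to Theorem~\ref{Thm:main.theorem.for.QM.bipartite.setting}.
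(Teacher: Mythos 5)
Your proposal matches the paper's proof exactly: the paper also establishes Corollary~\ref{Corollary:fjkerwjJHfe} as the straightforward composition of Lemma~\ref{Lemma:gjerkgjrlk}, Lemma~\ref{Lemma:fiehfheferf} and Lemma~\ref{lemma:fejfkjekfhkejhtkj}, with all the substance residing in those three lemmas. Your additional remarks about index bookkeeping and the downstream \emph{NP}-hardness corollary are consistent with how the paper organizes the argument.
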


\begin{proof}
	The claim is the straightforward combination of the statements of Lemma~\ref{Lemma:gjerkgjrlk}, Lemma~\ref{Lemma:fiehfheferf} and Lemma~\ref{lemma:fejfkjekfhkejhtkj}.
\end{proof}

\begin{corollary}\label{Corollary:fewfjkjkjcc}
	Dim-3$^{(AB)}$ is NP-hard.
\end{corollary}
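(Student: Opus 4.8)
The plan is to derive \emph{NP}-hardness of \emph{Dim-3}$^{(AB)}$ in exactly the same way that Corollary~\ref{Corollary:fewfjs3jkjcc} derived it for \emph{Dim-3}, simply reading off the chain of reductions that has already been assembled. The composition $\mathcal{A}' = \mathcal{A}'_3 \circ \mathcal{A}_2 \circ \mathcal{A}_1$ is a polynomial-time map from instances of \emph{3col} to instances of \emph{Dim-3}$^{(AB)}$, and Corollary~\ref{Corollary:fjkerwjJHfe} already packages the correctness of this composition: a graph $G$ is 3-colorable if and only if there is a bipartite $3$-dimensional quantum model for which the matrix $M_{yz;y'z'} := 3\,\tr(\rho\, E_{yz}\otimes F_{y'z'})$ fits $\Delta(G')$. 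So the only remaining task is to observe that the latter condition is precisely (a renaming of) a \emph{yes}-instance of \emph{Dim-3}$^{(AB)}$ under the reduction $\mathcal{A}'_3$, exactly mirroring how in the proof of Corollary~\ref{Corollary:fewfjs3jkjcc} the ``fits $\Delta(G')$'' condition was translated into the index set $\Omega'$ and the probability constraints $p_{x,yz}\in\{0,1\}$ handed to \emph{Dim-3}.

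Concretely, I would first recall the reduction $\mathcal{A}'_3$: it takes the graph $\Delta(G')$, labels the vertices of $G'$ by $y$ and the triangle attached to $y$ by $(y1),(y2),(y3)$, and outputs to \emph{Dim-3}$^{(AB)}$ the index set $\Omega$ consisting of all diagonal pairs $(yz,yz)$ together with all edge pairs $(yz,y'z')$ of $\Delta(G')$, with the prescribed values $p_{yz,y'z'} = 1/3$ on the diagonal (reflecting the factor $3$ in $M$) and $p_{yz,y'z'}=0$ on edges. This map is clearly computable in time polynomial in $|V|$ since $\Delta(G')$ has $O(|V|^2)$ vertices and $O(|V|^2)$ edges. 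Then I would invoke Corollary~\ref{Corollary:fjkerwjJHfe} verbatim: $G$ is 3-colorable $\Leftrightarrow$ there is a bipartite $3$-dimensional model with $3\,\tr(\rho E_{yz}\otimes F_{y'z'})$ fitting $\Delta(G')$ $\Leftrightarrow$ $\mathcal{A}'\bigl(\mathcal{A}_0(i)\bigr)$ accepts in \emph{Dim-3}$^{(AB)}$. Composing with the Cook--Levin-type reduction $\mathcal{A}_0$ from an arbitrary problem in \emph{NP} to \emph{3col}~\cite{garey1976some}, and using that \emph{3col} is itself \emph{NP}-complete, yields that \emph{Dim-3}$^{(AB)}$ is \emph{NP}-hard.

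There is essentially no obstacle here beyond bookkeeping: the real content — establishing the biconditional for the bipartite model — is already done in Lemma~\ref{lemma:fejfkjekfhkejhtkj} and hence in Corollary~\ref{Corollary:fjkerwjJHfe}. The one point that deserves a sentence of care is that the decision problem \emph{Dim-3}$^{(AB)}$ as defined in section~\ref{sect:problem.formulation} asks for a $d^2$-dimensional \emph{state} $\rho$ and $d$-dimensional local measurements with $d=3$, which matches exactly the bipartite $3$-dimensional model appearing in Corollary~\ref{Corollary:fjkerwjJHfe}; no dimension mismatch or normalization subtlety arises because the factor $3 = Z = Z'$ was absorbed into the definition of $M$ precisely so that the ``fits'' conditions become the $0$/$1$ acceptance constraints of \emph{Dim-3}$^{(AB)}$ after dividing by $d$. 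The short proof therefore reads: \emph{Corollary~\ref{Corollary:fjkerwjJHfe} exhibits $\mathcal{A}' = \mathcal{A}'_3\circ\mathcal{A}_2\circ\mathcal{A}_1$ as a polynomial-time reduction from \emph{3col} to \emph{Dim-3}$^{(AB)}$; since \emph{3col} is \emph{NP}-complete~\cite{garey1976some}, \emph{Dim-3}$^{(AB)}$ is \emph{NP}-hard.} Finally, as with \emph{MinDim}, this also gives \emph{NP}-hardness of \emph{MinDim}$^{(AB)}$, since \emph{Dim-3}$^{(AB)}$ reduces to \emph{MinDim}$^{(AB)}$ by checking whether the optimal dimension is at most $3$, completing the proof of Theorem~\ref{Thm:main.theorem.for.QM.bipartite.setting}.
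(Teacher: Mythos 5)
Your proposal is correct and follows exactly the paper's own route: the paper's proof of this corollary is the two-line observation that Corollary~\ref{Corollary:fjkerwjJHfe} supplies the reduction from \emph{3col} to \emph{Dim-3}$^{(AB)}$ and that \emph{3col} is \emph{NP}-complete~\cite{garey1976some}. Your additional bookkeeping (the explicit index set with values $1/3$ on the diagonal and $0$ on edges, and the polynomial size of $\Delta(G')$) is accurate and merely spells out what the paper leaves implicit in the definition of $\mathcal{A}'_3$.
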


\begin{proof}
	Corollary~\ref{Corollary:fjkerwjJHfe} reduces \emph{3col} to Dim-3$^{(AB)}$. Therefore, Dim-3$^{(AB)}$ is \emph{NP}-hard because \emph{3col} is \emph{NP}-complete~\cite{garey1976some}.
\end{proof}

Corollary~\ref{Corollary:fewfjkjkjcc} is sufficient to prove \emph{NP}-hardness of \emph{MinDim}$^{(AB)}$ because \emph{Dim-3}$^{(AB)}$ can be reduced to \emph{MinDim}$^{(AB)}$ by checking whether or not the optimal dimension computed by \emph{MinDim}$^{(AB)}$ is $\leq 3$.

\begin{lemma}\label{lemma:lb.op.norm}
	$$\frac{p_{yzy'z'}}{\sum_{j=1}^{Z'} p_{yzy'j}} \leq \| F_{y'z'} \|$$ and analogously, $$\frac{p_{yzy'z'}}{\sum_{j=1}^{Z} p_{yjy'z'}} \leq \| E_{yz} \|$$
\end{lemma}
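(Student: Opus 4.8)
The plan is to derive the inequality directly from Born's rule together with the normalization $\sum_{z'=1}^{Z'} F_{y'z'} = I_d$ and the variational characterization of the operator norm. Fix indices $y,z,y',z'$. By Born's rule we have $p_{yzy'z'} = \tr(\rho\, E_{yz}\otimes F_{y'z'})$, and summing over $z'$ and using $\sum_{j=1}^{Z'} F_{y'j} = I_d$ gives $\sum_{j=1}^{Z'} p_{yzy'j} = \tr\bigl(\rho\, E_{yz}\otimes I_d\bigr) = \tr(\rho_{yz}^{A})$, where $\rho_{yz}^{A}$ denotes the (subnormalized) operator $\tr_B\bigl(\rho\,(E_{yz}\otimes I_d)\bigr)$ obtained by partial trace on Bob's system; note $\rho_{yz}^{A}$ is positive semidefinite. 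Hence the left-hand side equals $\tr(\rho\,E_{yz}\otimes F_{y'z'}) / \tr(\rho_{yz}^{A})$.

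The key step is to recognize that, after normalizing, $\sigma := \rho_{yz}^{A}/\tr(\rho_{yz}^{A})$ is a legitimate density operator on Bob's side once we also trace out $A$: more cleanly, define the positive semidefinite operator $\tau := \tr_A\bigl(\rho\,(E_{yz}\otimes I_d)\bigr)$ on $\mathbb{C}^d$. Then $\tr(\tau) = \tr(\rho_{yz}^A) = \sum_{j} p_{yzy'j}$ and $\tr(\tau\, F_{y'z'}) = p_{yzy'z'}$. Dividing, $p_{yzy'z'}/\sum_j p_{yzy'j} = \tr\bigl( (\tau/\tr\tau)\, F_{y'z'}\bigr)$, and $\tau/\tr\tau$ is a density operator. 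For any density operator $\omega$ and any positive semidefinite $F$ one has $\tr(\omega F)\le \|F\|$, since $\tr(\omega F)\le \|F\|\,\tr(\omega) = \|F\|$ (equivalently $F \preceq \|F\|\,I$ and $\omega\succeq 0$). This yields $p_{yzy'z'}/\sum_j p_{yzy'j}\le \|F_{y'z'}\|$. The second inequality follows by the symmetric argument, tracing instead over the $A$ factor using $\sum_{j=1}^{Z} E_{yj} = I_d$, i.e.\ writing $p_{yzy'z'}/\sum_j p_{yjy'z'} = \tr\bigl((\tau'/\tr\tau')E_{yz}\bigr)\le \|E_{yz}\|$ with $\tau' := \tr_B\bigl(\rho\,(I_d\otimes F_{y'z'})\bigr)$.

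There is essentially no obstacle here; the only point requiring minor care is that one must divide by $\sum_j p_{yzy'j}$, so the statement implicitly presumes this quantity is nonzero (otherwise the numerator $p_{yzy'z'}$ also vanishes and the bound holds trivially, or the ratio is undefined and excluded). In the intended application inside the proof of Lemma~\ref{lemma:fejfkjekfhkejhtkj} this denominator equals $1/d \cdot d = 1 \neq 0$ (it is $\tr(\rho_{yz}^A)$ with $\rho$ maximally entangled, hence $1/d$ times $\tr E_{yz}$ after the analysis there), so no issue arises. I would present the proof in the two-to-three line form: apply Born's rule, sum out the complementary outcome label using completeness of the POVM to identify the normalized partial-trace state, and invoke $\tr(\omega F)\le\|F\|$ for $\omega$ a state and $F\succeq 0$.
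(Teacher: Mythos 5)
Your proposal is correct and follows essentially the same route as the paper: identify $p_{yzy'z'}=\tr(\kappa_{yz}F_{y'z'})$ with $\kappa_{yz}$ the partial trace of $\rho$ against $E_{yz}\otimes I$, note $\tr(\kappa_{yz})=\sum_j p_{yzy'j}$, and bound $\tr(\kappa_{yz}F_{y'z'})\leq\tr(\kappa_{yz})\|F_{y'z'}\|$. The only difference is cosmetic: the paper justifies positivity of $\kappa_{yz}$ explicitly by writing it as $\tr_A\bigl(\sqrt{E_{yz}}\otimes I\;\rho\;\sqrt{E_{yz}}\otimes I\bigr)$, whereas you assert it directly.
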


\begin{proof}

\beq\begin{split}\label{fenkjfkejfk}
	p_{yzy'z'} 
	&=	\tr(\rho E_{yz} \otimes F_{y'z'}) \\
	&=	\tr\bigl( \sqrt{E_{yz}} \otimes I \; \rho \; \sqrt{E_{yz}} \otimes I \; I \otimes F_{yz'}  \bigr) \\
	&= 	\tr\Bigl[  \tr_A\bigl( \sqrt{E_{yz}} \otimes I \; \rho \; \sqrt{E_{yz}} \otimes I \bigr)    F_{y'z'}     \Bigr]\\
	&= 	\tr\bigl[ \kappa_{yz} F_{y'z'} \bigr]
\end{split}\eeq
where
\[
	\kappa_{yz} := \tr_A\bigl( \sqrt{E_{yz}} \otimes I \; \rho \; \sqrt{E_{yz}} \otimes I \bigr).
\]
Note that
\beq\begin{split}\label{wefefekjjfkel24}
	\tr\bigl[ \kappa_{yz} \bigr]
	&=	\tr\Bigl[ \kappa_{yz} \sum_{z'} F_{y'z'} \Bigr]\\
	&=	\sum_{z'} \tr\Bigl[ \tr_A\bigl( \sqrt{E_{yz}} \otimes I \; \rho \; \sqrt{E_{yz}} \otimes I \bigr)  F_{y'z'} \Bigr]\\
	&=	\sum_{z'} \tr\Bigl[  \sqrt{E_{yz}} \otimes I \; \rho \; \sqrt{E_{yz}} \otimes I  \; I \otimes F_{y'z'} \Bigr]\\
	&=	\sum_{z'} \tr\Bigl[ \rho \; E_{yz} \otimes F_{y'z'} \Bigr]
	= 	\sum_{z'} p_{yzy'z'}.
\end{split}\eeq
By~\eqref{fenkjfkejfk} and~\eqref{wefefekjjfkel24},
\beq\begin{split}
	p_{yzy'z'}
	&=		\tr\Bigl[ \kappa_{yz} F_{y'z'} \Bigr]
	\leq		\tr( \kappa_{yz} ) \| F_{y'z'} \| \\
	&=		\Bigl( \sum_{j} p_{yzy'j} \Bigr) \| F_{y'z'} \| 
\end{split}\eeq
and therefore,
\[
	\frac{p_{yzy'z'}}{\sum_{j} p_{yzy'j}} \leq \| F_{y'z'} \| .
\]
\end{proof}

\begin{lemma}\label{lemma.fekrjf}
	Let $(E_j)_{j=1}^d$ denote a POVM on $\mathbb{C}^d$. Assume that for all $j \in [d]$, $\| E_j \| \geq 1$. Then, there exists $(\psi_j)_{j=1}^d$ orthonormal such that for all $j \in [d]$, $E_j = \ketbra{\psi_j}$.
\end{lemma}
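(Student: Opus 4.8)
The plan is to prove Lemma~\ref{lemma.fekrjf} by combining a dimension/trace argument with the extremal case of Cauchy--Schwarz, in the same spirit as the proofs of Lemma~\ref{Lemma:kjfwenNKer} and Theorem~\ref{Thm:main.result.vers2}. First I would observe that the POVM condition $\sum_{j=1}^d E_j = I_d$ together with self-duality of the psd cone gives
\[
	d = \| I_d \|_2^2 = \Bigl\| \sum_{j=1}^d E_j \Bigr\|_2^2 = \sum_{j=1}^d \| E_j \|_2^2 + \sum_{j \neq k} \tr(E_j E_k) \geq \sum_{j=1}^d \| E_j \|_2^2,
\]
since $\tr(E_j E_k) \geq 0$ for psd matrices. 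Here $\|\cdot\|_2$ is the Frobenius norm. On the other hand, each $E_j$ is psd with $E_j \leq I_d$ (because the other terms $E_k$ in the sum are psd), so its eigenvalues lie in $[0,1]$; hence the operator norm bounds the Frobenius norm via $\| E_j \|_2^2 = \sum_i \lambda_i(E_j)^2 \leq \sum_i \lambda_i(E_j) = \tr(E_j)$, but more to the point $\| E_j \|_2 \geq \| E_j \| \geq 1$ by hypothesis. Combining, $d \geq \sum_j \| E_j \|_2^2 \geq \sum_j 1 = d$, forcing $\| E_j \|_2 = 1$ for every $j$ and also $\sum_{j\neq k}\tr(E_j E_k) = 0$.

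Next I would extract the structure from these equalities. From $\| E_j \|_2 = 1$ and $\| E_j \| \geq 1$ and eigenvalues in $[0,1]$: writing the eigenvalues of $E_j$ as $\lambda_1 \geq \lambda_2 \geq \dots$, we have $\lambda_1 \geq 1$ hence $\lambda_1 = 1$, and $1 = \| E_j \|_2^2 = \lambda_1^2 + \sum_{i \geq 2} \lambda_i^2 = 1 + \sum_{i\geq 2}\lambda_i^2$ forces $\lambda_i = 0$ for $i \geq 2$. Therefore each $E_j$ is a rank-one orthogonal projection, $E_j = \ketbra{\psi_j}$ for some unit vector $\psi_j$. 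Finally, $\tr(E_j E_k) = |\braket{\psi_j}{\psi_k}|^2 = 0$ for $j \neq k$ (each term is nonnegative and they sum to zero) gives orthonormality of $(\psi_j)_{j=1}^d$, which is exactly the claim.

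I do not expect a serious obstacle here; the argument is essentially a repackaging of the Frobenius-norm computation already appearing twice in the paper, with the only extra ingredient being the elementary fact that a psd matrix bounded above by $I_d$ with Frobenius norm $1$ and operator norm $\geq 1$ must be a rank-one projector. The one point to state carefully is why $E_j \leq I_d$: this holds because $I_d - E_j = \sum_{k \neq j} E_k$ is a sum of psd matrices and hence psd. Alternatively, and perhaps more cleanly, one can avoid the eigenvalue bookkeeping entirely by noting $\| E_j \| \geq 1$ combined with $E_j \leq I_d$ already forces the largest eigenvalue to equal $1$, and then $\| E_j \|_2 = 1$ pins down the rest; I would present whichever version is shortest.
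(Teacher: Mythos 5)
Your proof is correct and follows essentially the same route as the paper: the identity $d = \|I\|_2^2 = \sum_j \|E_j\|_2^2 + \sum_{j\neq k}\tr(E_j E_k) \geq \sum_j \|E_j\|_2^2 \geq \sum_j \|E_j\|^2 \geq d$, forcing equality everywhere, vanishing cross terms, and hence rank-one orthogonal projectors. If anything, your version is more complete than the paper's, which stops after noting $\sum_{j\neq k}\tr(E_j E_k)=0$ and leaves the extraction of the rank-one projector structure (which you carry out via the eigenvalue argument using $E_j \leq I_d$) implicit in the equality $\|E_j\|_2 = \|E_j\| = 1$.
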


\begin{proof}
	\beq\begin{split}\label{wefwwrefekjjfkel24}
	d 
	&=		\| I \|_2^2
	=		\Bigl\|  \sum_{j=1}^d E_j \Bigr\|_2^2
	=		\Bigl( \sum_{j=1}^d \| E_j \|_2^2 \Bigr) + \Bigl( \sum_{j \neq j'} \tr( E_j E_{j'} ) \Bigr)\\
	&\geq	\Bigl( \sum_{j=1}^d \| E_j \|_2^2 \Bigr)
	\geq	\Bigl( \sum_{j=1}^d \| E_j \|^2 \Bigr)
	\geq	d.
\end{split}\eeq
Thus,
\[
	 d - \Bigl( \sum_{j=1}^d \| E_j \|_2^2 \Bigr) = \sum_{j \neq j'} \tr( E_j E_{j'} ) \leq 0.
\]
The last inequality holds by assumption. This last inequality can only be satisfied with equality because $\tr(AB) \geq 0$ for all positive semidefinite matrices $A,B$.
\end{proof}

\section{Conclusions}\label{sect:conclusions}

We have shown that optimal quantum models cannot be computed efficiently from measured data. We proved this claim in both the natural 1-party (cf.~Theorem~\ref{Thm:main.theorem.for.complex.QM}) and the natural 2-party setting (cf.~Theorem~\ref{Thm:main.theorem.for.QM.bipartite.setting}). We proved \emph{NP}-hardness by reducing 3-coloring to the inference of quantum models.

What other questions remain in this field? In both Theorem~\ref{Thm:main.theorem.for.complex.QM} and Theorem~\ref{Thm:main.theorem.for.QM.bipartite.setting} we search for a quantum model which reproduces the measured probabilities exactly. Does the hardness result extend to situations where we are satisfied with only approximating the measured probabilities? And which classes of data $(p_{xyz})_{(xyz) \in \Omega}$ admit efficient inference? In regard of the latter question, it appears important to illuminate the \emph{tradeoff} between 
\begin{itemize}
\item the relevance of the class of considered datasets $\{ (p_{xyz})_{(xyz) \in \Omega} \}$ and
\item the computational hardness of inference associated to those datasets.
\end{itemize}
The hardness of the classical analog of \emph{MinDim} turns out to be much easier to prove as it directly reduces to the problem of computing the so called nonnegative rank which is known to be \emph{NP}-hard~\cite{vavasis2009complexity}.

\subsection*{Acknowledgement}

I would like to thank Aram Harrow, Robin Kothari, Lior Eldar, Pablo Parrilo, Jonathan Kelner, Ankur Moitra, L{\'i}dia del Rio, Adam Sawicki and Antonios Varvitsiotis for interesting discussions. In particular I would like to thank Matt Coudron for fruitful discussions and for pointing out reference~\cite{peeters1996orthogonal}. This work is supported by the ARO grant Contract Number W911NF-12-0486. This work is preprint MIT-CTP/4719.

\appendix
\section{Sketch of derivation from~\cite{peeters1996orthogonal}}\label{Sect:derivation.peeters}

For the reader's convenience, we provide a sketch of the arguments from~\cite{peeters1996orthogonal} that lead to the proof of Theorem~\ref{Thm:peeters}.

\begin{lemma}\label{fwefknewkfwejk}
	Every possible 3-coloring of $H_{ij}$ is either of the form 
	\begin{itemize}
	\item $\bigl\{ \{ i,c_{ij} \}, \{ a_{ij},d_{ij} \}, \{ b_{ij},j \} \bigr\}$,
	\end{itemize}
	or it is of the form 
	\begin{itemize}
	\item $\bigl\{ \{ i,j \}, \{ a_{ij},c_{ij} \}, \{ b_{ij},d_{ij} \} \bigr\}$.
	\end{itemize}
	Here, $\{ x,y \}$ means that $x,y \in H_{ij}$ share the same color.
\end{lemma}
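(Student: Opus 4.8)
The plan is to analyze the graph $H_{ij}$ of figure~\ref{Fig:Peeters.gadget} directly by case analysis on the colors of its six vertices, exploiting the two triangles it contains. First I would observe that $\{i, a_{ij}, b_{ij}\}$ and $\{j, c_{ij}, d_{ij}\}$ each form a triangle (from the edges drawn in figure~\ref{Fig:Peeters.gadget}), so in any proper $3$-coloring each of these three-element sets receives all three colors. Similarly $b_{ij}$ and $c_{ij}$ are adjacent, and $\{i,d_{ij}\}$ and $\{a_{ij},j\}$ are edges. I would name the coloring $\mathrm{c}:V(H_{ij})\to\{\mathrm r,\mathrm g,\mathrm b\}$ and, without loss of generality, fix $\mathrm{c}(i)=\mathrm r$, $\mathrm{c}(a_{ij})=\mathrm g$, $\mathrm{c}(b_{ij})=\mathrm b$ (this is allowed since the statement is about which pairs of vertices are monochromatic, a property invariant under permuting the three color names).

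Next I would propagate the constraints. Since $c_{ij}$ is adjacent to $b_{ij}$ (color $\mathrm b$) and lies in the triangle with $j$ and $d_{ij}$, we have $\mathrm{c}(c_{ij})\in\{\mathrm r,\mathrm g\}$. This gives exactly two cases. \emph{Case 1: $\mathrm{c}(c_{ij})=\mathrm r$.} Then $c_{ij}$ and $i$ share color $\mathrm r$. In the triangle $\{j,c_{ij},d_{ij}\}$ the remaining two colors $\mathrm g,\mathrm b$ go to $j$ and $d_{ij}$; the edge $(i,d_{ij})$ forbids $\mathrm{c}(d_{ij})=\mathrm r$, which is automatic, while the edge $(a_{ij},j)$ forbids $\mathrm{c}(j)=\mathrm g$, forcing $\mathrm{c}(j)=\mathrm b$ and hence $\mathrm{c}(d_{ij})=\mathrm g$. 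This yields the monochromatic pairs $\{i,c_{ij}\}$, $\{a_{ij},d_{ij}\}$, $\{b_{ij},j\}$, the first listed form. \emph{Case 2: $\mathrm{c}(c_{ij})=\mathrm g$.} Then $c_{ij}$ and $a_{ij}$ share color $\mathrm g$. In the triangle $\{j,c_{ij},d_{ij}\}$ the colors $\mathrm r,\mathrm b$ go to $j$ and $d_{ij}$; the edge $(i,d_{ij})$ forbids $\mathrm{c}(d_{ij})=\mathrm r$, forcing $\mathrm{c}(d_{ij})=\mathrm b$ and $\mathrm{c}(j)=\mathrm r$; one checks the edge $(a_{ij},j)$ is satisfied since $\mathrm{c}(j)=\mathrm r\neq\mathrm g$. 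This yields the monochromatic pairs $\{i,j\}$, $\{a_{ij},c_{ij}\}$, $\{b_{ij},d_{ij}\}$, the second listed form. I would also note both colorings are in fact realizable, so neither case is vacuous, though strictly the lemma only needs the ``every coloring is of one of these forms'' direction.

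I do not expect a serious obstacle here; the only point requiring care is making sure the edge set of $H_{ij}$ is read off correctly from figure~\ref{Fig:Peeters.gadget} — in particular that the two ``long'' edges are $(i,d_{ij})$ and $(a_{ij},j)$ and that the middle edge is $(b_{ij},c_{ij})$ — since the whole argument hinges on exactly these adjacencies together with the two triangles. A minor bookkeeping subtlety is the ``without loss of generality'' reduction: one must check that the three color names can indeed be permuted freely, which holds because $H_{ij}$ has no pre-assigned colors and the conclusion is stated purely in terms of color-classes (the sets $\{x,y\}$), not specific colors.
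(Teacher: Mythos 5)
Your proof is correct and is simply the worked-out version of what the paper leaves implicit: the paper's entire proof is ``Inspect the graph $H_{ij}$,'' and your case analysis (fixing the colors on the triangle $\{i,a_{ij},b_{ij}\}$ and branching on the color of $c_{ij}$) is exactly that inspection, with the edge set read off correctly from figure~\ref{Fig:Peeters.gadget}. No gaps; the WLOG reduction via permuting color names is legitimate since the conclusion is stated in terms of monochromatic pairs only.
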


\begin{proof}
	Inspect the graph $H_{ij}$.
\end{proof}	

\begin{corollary}\label{fwefkljefejejwkk3}
	$G$ is 3-colorable if and only if $G'$ is 3-colorable.
\end{corollary}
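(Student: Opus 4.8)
The plan is to prove the two implications separately, relying on the two structural facts that make the construction work: $G$ sits inside $G'$ as an induced subgraph (we only \emph{add} the gadget vertices and edges to $G$), and the gadgets $H_{ij}$ attached to distinct pairs $\{i,j\}$ overlap only in vertices of $V$.

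The direction ``$G'$ 3-colorable $\Rightarrow$ $G$ 3-colorable'' I would dispatch by restriction: since $V \subseteq V'$ and $E \subseteq E'$, the restriction to $V$ of any proper 3-coloring of $G'$ is itself a proper 3-coloring of $G$.

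For the converse, let $\mathrm{c} \colon V \to \{ \mathrm{r}, \mathrm{g}, \mathrm{b} \}$ be a proper 3-coloring of $G$, and extend it one gadget at a time. Fix a pair $\{i,j\} \subseteq V$ with $i \neq j$; since $a_{ij}, b_{ij}, c_{ij}, d_{ij}$ occur in no other gadget, it suffices to color them consistently with the already fixed values $\mathrm{c}(i), \mathrm{c}(j)$ so that $H_{ij}$ becomes properly colored. This is where Lemma~\ref{fwefknewkfwejk} enters: whichever of $\mathrm{c}(i) = \mathrm{c}(j)$ or $\mathrm{c}(i) \neq \mathrm{c}(j)$ holds, one of the two admissible color patterns of $H_{ij}$ can be realized. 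If $\mathrm{c}(i) \neq \mathrm{c}(j)$ I would use the pattern $\bigl\{ \{i, c_{ij}\}, \{a_{ij}, d_{ij}\}, \{b_{ij}, j\} \bigr\}$, which forces $c_{ij}$ onto color $\mathrm{c}(i)$, $b_{ij}$ onto color $\mathrm{c}(j)$, and $a_{ij}, d_{ij}$ onto the third color; if $\mathrm{c}(i) = \mathrm{c}(j)$ I would use $\bigl\{ \{i,j\}, \{a_{ij}, c_{ij}\}, \{b_{ij}, d_{ij}\} \bigr\}$, placing $a_{ij}, c_{ij}$ on one of the two remaining colors and $b_{ij}, d_{ij}$ on the other. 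Carrying this out for every pair while leaving $\mathrm{c}$ unchanged on $V$ yields a coloring of $V'$ that is proper on every gadget edge, and proper on every original edge of $G$ because $\mathrm{c}$ already was; hence $G'$ is 3-colorable.

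The step I would write out with care is this extension: one must verify that each of the two patterns of Lemma~\ref{fwefknewkfwejk} genuinely completes to a proper 3-coloring of $H_{ij}$ for the prescribed $\mathrm{c}(i), \mathrm{c}(j)$ (a finite check over the nine edges of $H_{ij}$), and that the choices made for different pairs cannot conflict, which holds precisely because distinct gadgets share only vertices of $V$ and those are colored once and for all by $\mathrm{c}$. I do not anticipate any real obstacle beyond this bookkeeping; the substance of the corollary is simply that $H_{ij}$ is ``color-transparent,'' imposing no constraint between $\mathrm{c}(i)$ and $\mathrm{c}(j)$, which is exactly the content of Lemma~\ref{fwefknewkfwejk}.
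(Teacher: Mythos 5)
Your proposal is correct and is essentially the argument the paper intends: the paper leaves the corollary as an immediate consequence of Lemma~\ref{fwefknewkfwejk} (restriction for one direction, gadget-by-gadget extension for the other, using that both the ``$i,j$ same color'' and ``$i,j$ different color'' patterns are realizable and that distinct gadgets share only vertices of $V$). Your added care about checking that the two patterns genuinely yield proper colorings of $H_{ij}$ is a sensible bit of bookkeeping the paper omits, not a different route.
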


\begin{lemma}\label{fwfkk4kfke}
	Fix $i,j \in G$ and let $H_{ij}$ be as in Fig.~\ref{Fig:Peeters.gadget}. Choose an orthonormal basis in $\mathbb{C}^6$ and denote its elements by $e_i, e_{a_{ij}}, e_{b_{ij}}, e_{c_{ij}}, e_{d_{ij}},e_j$. With respect to that ordering of the basis elements, there are exactly two 3-parameter families of matrices which fit $H_{ij}$ and whose rank is 3, namely,
	\beq\begin{split}
	M &=
	\left( \begin{array}{cccccc}
  	1	&0	&0	&0	&0	&a   \\
	0	&1	&0	&b	&0	&0  \\
	0	&0	&1	&0	&c	&0   \\
	0	&b^{-1}	&0	&1	&0	&0   \\
	0	&0	&c^{-1}	&0	&1	&0   \\
	a^{-1}	&0	&0	&0	&0	&1  
\end{array} \right),\\
	M' &= \left( \begin{array}{cccccc}
  	1	&0	&0	&a	&0	&0   \\
	0	&1	&0	&0	&b	&0  \\
	0	&0	&1	&0	&0	&c   \\
	a^{-1}	&0	&0	&1	&0	&0   \\
	0	&b^{-1}	&0	&0	&1	&0   \\
	0	&0	&c^{-1}	&0	&0	&1  
\end{array} \right)
	\end{split}\eeq
	where $a,b,c \in \mathbb{C}\backslash \{0\}$.
\end{lemma}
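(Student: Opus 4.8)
The plan is to treat the statement as a question about the affine variety of $6\times 6$ complex matrices $A$ obeying three constraints --- $A_{kk}=1$ for every vertex $k$, $A_{k\ell}=0$ for every edge $(k,\ell)$ of $H_{ij}$, and $\rank A\le 3$ --- and to show this variety is exactly $\{M(a,b,c)\}\cup\{M'(a,b,c)\}$ with $a,b,c$ ranging over $\mathbb C\setminus\{0\}$ (the hypothesis $\rank A=3$ will turn out to be automatic).

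First I would record the combinatorial input. The gadget $H_{ij}$ is the triangular prism: the two triangles $\{i,a_{ij},b_{ij}\}$ and $\{c_{ij},d_{ij},j\}$ joined by the perfect matching $\{b_{ij},c_{ij}\}$, $\{i,d_{ij}\}$, $\{a_{ij},j\}$. Hence its complement is the $6$-cycle $i - c_{ij} - a_{ij} - d_{ij} - b_{ij} - j - i$, and any $A$ fitting $H_{ij}$ equals $I_6$ plus a matrix supported on the twelve off-diagonal positions indexed by the six edges of this cycle. Moreover, the three rows of $A$ indexed by the triangle $\{i,a_{ij},b_{ij}\}$, restricted to the three columns of that triangle, form $I_3$ (the off-diagonal entries of that $3\times 3$ block sit on triangle edges and so vanish); these three rows are therefore linearly independent, $\rank A\ge 3$, and --- with the hypothesis --- $\rank A=3$, with those rows spanning the row space.

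Next I would expand the other three rows (those of $c_{ij}, d_{ij}, j$) in this basis. Matching the three ``triangle columns'' pins down the coefficients; matching the remaining three columns then produces, for each such row, three scalar relations --- nine in total --- each either of the form ``(one unknown entry) $\cdot$ (another) $=0$'' or ``(a product of two entries) $+$ (a product of two entries) $=1$''. Writing these nine relations down correctly, tracking which unknown lies on which cycle edge and in which orientation, is the one genuinely fiddly step, and the main obstacle; after that everything is routine. (As a sanity check, $A^{T}$ also fits $H_{ij}$ and has rank $3$, but the nine relations are symmetric under swapping the two orientations of each cycle edge, so the transpose yields nothing new.)

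Finally I would solve the system by a single case split. One of the three ``$=1$'' relations --- say the one coming from the row of $c_{ij}$ --- forces one of its two product terms to be nonzero, and the two sub-cases are mutually exclusive; in each sub-case the six ``$=0$'' relations propagate around the $6$-cycle, alternately forcing half the cycle edges to carry only zero entries and the other half to carry a pair of mutually inverse nonzero entries. The sub-case $A_{c_{ij}i}A_{ic_{ij}}\ne 0$ gives exactly the family $M'(a,b,c)$ and the sub-case $A_{c_{ij}a_{ij}}A_{a_{ij}c_{ij}}\ne 0$ gives exactly $M(a,b,c)$, with the three surviving nonzero entries (one per matched edge) serving as the free parameters; the two families are distinct because their nonzero off-diagonal supports are the two disjoint perfect matchings of the $6$-cycle. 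Conversely, each $M(a,b,c)$ and $M'(a,b,c)$ plainly fits $H_{ij}$ and has rank $3$ (in each, the six rows fall into three scalar-proportional pairs), which finishes the proof.
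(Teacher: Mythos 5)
Your proposal is correct and follows essentially the same route as the paper: write the general fitting matrix (whose only free off-diagonal entries sit on the complementary $6$-cycle), use the identity $3\times3$ block to force the last three rows to be combinations of the first three, extract the resulting nine bilinear relations, and propagate a two-way case split around the cycle to land on the two matchings giving $M$ and $M'$. The only cosmetic difference is where the case split is seeded (the paper branches on $dv=0$ from one of the ``$=0$'' relations rather than on a ``$=1$'' relation), which leads to the same propagation.
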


\begin{proof}
	By definition~\ref{Def:fitting.of.a.graph} and Fig.~\ref{Fig:Peeters.gadget}, a matrix $A$ fitting $H_{ij}$ must be of the form
	\[
A = \left( \begin{array}{cccccc}
  	1	&0	&0	&a	&0	&b   \\
	0	&1	&0	&c	&d	&0  \\
	0	&0	&1	&0	&e	&f   \\
	u	&v	&0	&1	&0	&0   \\
	0	&w	&x	&0	&1	&0   \\
	y	&0	&z	&0	&0	&1  
\end{array} \right)
	\]
	for some complex scalars $a,b,c,d,e,f$ and $u,v,w,x,y,z$. Using Gaussian elimination, we arrive at
	\[
A = \left( \begin{array}{cccccc}
  	1	&0	&0	&a	&0	&b   \\
	0	&1	&0	&c	&d	&0  \\
	0	&0	&1	&0	&e	&f   \\
	0	&0	&0	&1-au-vc	&-dv	&-bu   \\
	0	&0	&0	&-cw	&1-dw-ex	&-fx   \\
	0	&0	&0	&-ay	&-ez	&1-by-fz  
\end{array} \right).
	\]
	Assume $\rank(A) \leq 3$. Then,
	\[
	A' :=\left( \begin{array}{ccc}
	1-au-vc	&-dv	&-bu   \\
	-cw	&1-dw-ex	&-fx   \\
	-ay	&-ez	&1-by-fz  
\end{array} \right) = 0.
	\]
	By $A'_{12}=0$, we have that $dv=0$. This leaves us with the following alternatives:
	\begin{itemize}
	\item		$d = 0$,
	\item		$v = 0$,
	\item		$d = 0$ and $v = 0$.
	\end{itemize}
	Consider the alternative ``$d = 0$". By $d = 0$ and $A'_{22} = 0$, 
	\beq\label{qfwe453f} 
		e \neq 0, \ x \neq 0.
	\eeq
	By $e \neq 0$ and $A'_{32} = 0$, $z = 0$. By $z = 0$ and $A'_{33} = 0$,
	\beq\label{ga435gdfg}
		b \neq 0, \ y \neq 0.
	\eeq
	By $b \neq 0$ and $A'_{13} = 0$,
	\beq\label{jtj54whdxxd}
		u=0.
	\eeq
	By $y \neq 0$ (cf.~\eqref{ga435gdfg}) and $A'_{31} = 0$, $a = 0$. By $x \neq 0$ (cf.~\eqref{qfwe453f}) and $A'_{23} = 0$, $f=0$. By $u =0$ (cf.~\eqref{jtj54whdxxd}) and $A'_{11} = 0$,
	\[
		v \neq 0, \ c \neq 0.
	\]
	By $c \neq 0$ and $A'_{21} = 0$, $w = 0$. Hence, there exist complex scalars $b,c,e,v,x,y$ such that
	\beq\label{fwknkekgrekg}
A = \left( \begin{array}{cccccc}
  	1	&0	&0	&0	&0	&b   \\
	0	&1	&0	&c	&0	&0  \\
	0	&0	&1	&0	&e	&0   \\
	0	&v	&0	&1	&0	&0   \\
	0	&0	&x	&0	&1	&0   \\
	y	&0	&0	&0	&0	&1  
\end{array} \right).
	\eeq
	Denote by $\vec{A}_j$ the $j$-th column of $A$. By~\eqref{fwknkekgrekg},
	\beq\begin{split}
		\mathrm{span}\{ \vec{A}_1,\vec{A}_6 \}  &\perp  \mathrm{span}\{ \vec{A}_2,\vec{A}_4 \}\\
		\mathrm{span}\{ \vec{A}_1,\vec{A}_6 \}  &\perp  \mathrm{span}\{ \vec{A}_3,\vec{A}_5 \}\\
		\mathrm{span}\{ \vec{A}_2,\vec{A}_4 \}  &\perp  \mathrm{span}\{ \vec{A}_3,\vec{A}_5 \}.				
	\end{split}\eeq
	Hence, if we violate any of the relations
	\beq\label{fwfkwfkle4535}
		\mathbb{C} \vec{A}_1 = \mathbb{C} \vec{A}_6, \ \mathbb{C} \vec{A}_2 = \mathbb{C} \vec{A}_4, \ \mathbb{C} \vec{A}_3 = \mathbb{C} \vec{A}_5
	\eeq
	we necessarily violate $\rank(A) = 3$. Therefore, by~\eqref{fwfkwfkle4535}, $\rank(A) = 3$ implies that for some complex and non vanishing scalars $a,b,c$, the matrix $A$ is of the form $M$ (see statement of the Lemma).  
	
	By a similar line of arguments we conclude that for the alternative ``$v = 0$", the matrix $A$ is of the form $M'$. Moreover, working out the details of the discussion of the alternative ``$v = 0$", it is easy to see that the remaining alternative ``$d = 0$ and $v = 0$" and the demand $A'_{11} = 0$ can not coexist. 
\end{proof}

\begin{theorem}[see~\cite{peeters1996orthogonal}]\label{Thm:peeters}
	$G$ is 3-colorable if and only if there exists a Gram matrix $A$ such that $\rank(A) \leq 3$ and such that $A$ fits $G'$.
\end{theorem}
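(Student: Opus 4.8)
The statement is an equivalence, and I would prove the two implications separately, in each case leaning on structural facts already in hand: Corollary~\ref{fwefkljefejejwkk3}, which lets me pass between proper $3$-colourings of $G$ and of $G'$, and Lemma~\ref{fwfkk4kfke}, which classifies the rank-$3$ matrices fitting a gadget $H_{ij}$. The forward direction is essentially bookkeeping; all the content lives in the converse.

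For ``$\Rightarrow$'', suppose $G$ is $3$-colourable. By Corollary~\ref{fwefkljefejejwkk3} so is $G'$; fix a proper colouring $\mathrm{c}\colon V' \to \{\mathrm{r},\mathrm{g},\mathrm{b}\}$. I would fix an orthonormal basis $\{\vec{e}_{\mathrm{r}},\vec{e}_{\mathrm{g}},\vec{e}_{\mathrm{b}}\}$ of $\mathbb{C}^3$, put $\vec{P}_v := \vec{e}_{\mathrm{c}(v)}$ for each $v \in V'$, let $P$ be the matrix with these columns, and set $A := \bar{P}^T P$. Then $A$ is a Gram matrix with $\rank(A) \le 3$; moreover $A_{vv} = \|\vec{e}_{\mathrm{c}(v)}\|^2 = 1$ for all $v$, and for $(v,v') \in E'$ we have $\mathrm{c}(v) \ne \mathrm{c}(v')$, whence $A_{vv'} = \langle \vec{e}_{\mathrm{c}(v)}, \vec{e}_{\mathrm{c}(v')}\rangle = 0$. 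So $A$ fits $G'$ in the sense of Definition~\ref{Def:fitting.of.a.graph}.

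For ``$\Leftarrow$'', let $A$ be a Gram matrix with $\rank(A) \le 3$ that fits $G'$, and factor $A = \bar{P}^T P$ with $P \in \mathbb{C}^{3 \times |V'|}$; since $A_{vv} = 1$, every column $\vec{P}_v$ is a unit vector. The plan is to show that the vectors $\{\vec{P}_v : v \in V\}$ occupy at most three pairwise orthogonal directions, which immediately yields a proper $3$-colouring of $G$. Fix a pair $\{i,j\} \subseteq V$ with $i \ne j$, and let $A^{(ij)}$ be the principal submatrix of $A$ indexed by the six vertices of the copy of $H_{ij}$ inside $G'$, ordered $i, a_{ij}, b_{ij}, c_{ij}, d_{ij}, j$ as in Lemma~\ref{fwfkk4kfke}. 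Since the vertices, edges and loops of $H_{ij}$ are among those of $G'$ and $A$ fits $G'$, the matrix $A^{(ij)}$ fits $H_{ij}$. It has $\rank(A^{(ij)}) \le \rank(A) \le 3$; and the crucial small observation is that $\rank(A^{(ij)}) \ge 3$, because $H_{ij}$ contains a triangle — for instance on $\{i, a_{ij}, b_{ij}\}$ — whose corresponding principal $3\times 3$ submatrix of $A^{(ij)}$ is the identity. Hence $\rank(A^{(ij)}) = 3$ and Lemma~\ref{fwfkk4kfke} applies: $A^{(ij)}$ equals $M$ or $M'$. In the $M$ case the $(i,j)$-entry has modulus $1$ (it is $a$, with $a^{-1} = \bar{a}$ forced by Hermiticity of the Gram matrix), so equality in Cauchy--Schwarz makes $\vec{P}_i$ and $\vec{P}_j$ parallel; in the $M'$ case the $(i,j)$-entry is $0$, so $\vec{P}_i \perp \vec{P}_j$.

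Thus any two of the unit vectors $\{\vec{P}_v : v \in V\}$ are either parallel or orthogonal, so ``$\vec{P}_u$ parallel to $\vec{P}_v$'' is an equivalence relation on $V$ whose distinct classes have mutually orthogonal representatives; there are therefore at most three classes, and I would colour each $v$ by the name of its class. For $(v,v') \in E \subseteq E'$ we have $A_{vv'} = 0$, i.e.\ $\vec{P}_v \perp \vec{P}_{v'}$, so $v$ and $v'$ lie in different classes and get different colours, and the colouring is proper. The one step that needs a moment's care is the promotion of ``$\rank(A^{(ij)}) \le 3$'' to ``$\rank(A^{(ij)}) = 3$'': only with rank \emph{exactly} $3$ can the rigidity of Lemma~\ref{fwfkk4kfke} be invoked, and that is precisely what the triangle embedded in each $H_{ij}$ supplies. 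Everything else — the colouring-to-orthonormal-frame construction and the elementary fact that pairwise parallel-or-orthogonal unit vectors in $\mathbb{C}^3$ span at most three orthogonal directions — is routine.
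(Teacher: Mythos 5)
Your proof is correct and follows essentially the same route as the paper's: the forward direction builds the Gram matrix of standard basis vectors indexed by a colouring of $G'$ (via Corollary~\ref{fwefkljefejejwkk3}), and the converse applies Lemma~\ref{fwfkk4kfke} to each gadget $H_{ij}$ to force the parallel-or-orthogonal dichotomy among the unit vectors $\vec{P}_v$, $v \in V$. Your handling of the final step --- noting explicitly that the triangle inside $H_{ij}$ forces $\rank(A^{(ij)}) = 3$ so that the lemma applies, and then colouring by parallelism classes --- is a clean streamlining of the paper's case analysis on $\rank(A_V) \in \{1,2,3\}$, but the substance is identical.
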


\begin{proof}

``$\Rightarrow$": By Corollary~\ref{fwefkljefejejwkk3}, $G$ is 3-colorable if and only if $G'$ is 3-colorable. For all $j \in V'$, let $\mathrm{c}(j)$ denote the color of vertex $j \in V'$ as specified by a 3-coloring of $G'$. Define $P \in \AR^{3 \times |V'|}$ by
	\[
		P = ( \vec{e}_{\mathrm{c}(1)}, ..., \vec{e}_{\mathrm{c}(V')} ), 
	\]
	and set $A' := P^T P$. Then, $A'$ fits $G'$ and $\rank(A') = 3$. 

``$\Leftarrow$": By assumption there exists $d \in [3]$ and $P \in \mathbb{C}^{d \times | V' |}$ such that $A = \bar{P}^T P$ fits $G'$. For some $i,j \in V$, we denote by $A^{ij}_{_{\square}}$ the sub-Gram matrix of $A$ corresponding to the vertices $i,j,a_{ij}, b_{ij}, c_{ij}, d_{ij} \in V'$. Let $P^{ij}_{_{\square}}$ be the submatrix of \emph{P} with the property $A^{ij}_{_{\square}} = \bar{P}_{_{\square}}^{ij \, T} P^{ij}_{_{\square}}$. Denote by $W_i$, $W_{a_{ij}}$, $W_{b_{ij}}$, $W_{c_{ij}}$, $W_{d_{ij}}$, and $W_{j}$  the 1-dimensional subspaces spanned by the respective column vectors of $P_{_{\square}}$. 

Since $A^{ij}_{_{\square}}$ fits $H_{ij}$ we have by Lemma~\ref{fwfkk4kfke} that $A^{ij}_{_{\square}}$ is either of the form $M$ or $M'$ (as defined in Lemma~\ref{fwfkk4kfke}). Inspecting the alternatives $M$ and $M'$ we observe that we only encounter two possible scenarios, namely,
\begin{itemize}
\item \emph{Scenario `parallel'.} $W_i = W_j, W_{a_{ij}} = W_{c_{ij}}, W_{b_{ij}} = W_{d_{ij}}$. This happens if $A$ is of the form $M$.
\item \emph{Scenario `perpendicular'.} $W_i = W_{c_{ij}}, W_{a_{ij}} = W_{d_{ij}}, W_{b_{ij}} = W_j, \text{ and $W_i \perp W_j$}$. This happens if $A$ is of the form $M'$. 
\end{itemize}

Let $A_V$ be the sub-matrix of $A$ associated to the vertices $V \subseteq V'$. We have that $\rank(A_V) \leq \rank(A) \leq 3$. We treat the alternatives $\rank(A_V) = 1,2,3$ separately.

\emph{Case ``$\rank(A_V) = 1$".} By Lemma~\ref{fwfkk4kfke}, $A^{ij}_{_{\square}}$ is either of the form $M$ or $M'$. The possibility $M'$ is ruled out because it would lead to $\rank(A_V) \geq 2$. The only possibility is $A^{ij}_{_{\square}} = M$ with $a \neq 0$ (see definition of $M$ in Lemma~\ref{fwfkk4kfke}). Hence, all entries of $A_V$ are non-zero, i.e., the edge set of $G$ is empty (recall definition~\ref{Def:fitting.of.a.graph}). We conclude that that $G$ is 3-colorable.

\emph{Case ``$\rank(A_V) = 2$".} By $\rank(A_V) = 2$, there exist $i_1,i_2 \in V$ with $i_1 \neq i_2$ such that $W_{i_1} \neq W_{i_2}$. Therefore, `scenario parallel' is ruled out and `scenario perpendicular' applies, i.e., $W_{i_1} \perp W_{i_2}$. By $\rank(A_V) = 2$, $W_k \subset W_{i_1} \oplus W_{i_2}$ for all $k \in V$. Either `scenario parallel' or `scenario perpendicular' applies. Thus, either 
\[
	W_k = W_{i_1} \ \Rightarrow \ W_k \perp W_{i_2},
\]
or
\[
	W_k \perp W_{i_1} \ \Rightarrow \ W_k = W_{i_2}.
\]
On that basis, for all $k \in V$, we define the coloring of $G$ by
\beq\label{kewjfkejfknv}
	c(k) = 	\left\{ \begin{array}{ll} 
			\text{`r'},	& \text{if $W_k = W_{i_1}$}   \\ 
			\text{`g'},	& \text{if $W_k = W_{i_2}$.}
			\end{array} \right.
\eeq
We need to check that this is a valid 3-coloring. Assume $n,m \in V$ and $(n,m) \in E$. Then, by definition~\ref{Def:fitting.of.a.graph}, $A_{nm} = 0$. It follows that `scenario parallel' is ruled out, i.e., we have $W_n \perp W_m$. Hence, there cannot exist $l \in [2]$ such that $W_n = W_{i_s}$ and $W_m = W_{i_s}$. By~\eqref{kewjfkejfknv}, $c(n) \neq c(m)$.

\emph{Case ``$\rank(A_V) = 3$".} This case proceeds exactly as the previous case. Instead of two distinguished 1-dimensional subspaces we end up with three subspaces $W_{i_1}, W_{i_2}, W_{i_3}$ which are pairwise perpendicular. Again, for all $k \in V$, only one of the alternatives 
\beqa
	&W_k = W_{i_1}, \ W_k \perp W_{i_2}, \ W_k \perp W_{i_3}, \nn \\
	&W_k \perp W_{i_1}, \ W_k = W_{i_2}, \ W_k \perp W_{i_3}, \nn \\
	&W_k \perp W_{i_1}, \ W_k \perp W_{i_2}, \ W_k = W_{i_3} \nn
\eeqa
can apply. This motivates a coloring analogous to~\eqref{kewjfkejfknv}. The check that this is a valid 3-coloring is identical to the previous analysis of the ansatz~\eqref{kewjfkejfknv}.

\end{proof}



\end{document}